\documentclass[11pt,letter]{article}
\usepackage{fullpage}
\usepackage[utf8x]{inputenc}
\usepackage{amsthm}
\usepackage{todonotes,wrapfig,float,graphicx,amssymb,textcomp,array,amsmath}
\usepackage{enumerate,enumitem}
\usepackage{multirow}
\usepackage{tabularx}
\usepackage{color,xcolor}
\usepackage{enumitem}
\usepackage{float}
\usepackage{nicefrac}

\allowdisplaybreaks

\definecolor{mycolor}{rgb}{0, 0, 0}

\usepackage{lineno}
\title{Polychromatic $2$-colorings with Bounded Discrepancy for Triangulations\thanks{Appeared in the 20th Scandinavian Symposium on Algorithm Theory (SWAT 2026).}}

\author{
Alma Arevalo Loyola\thanks{School of Computer Science, Carleton University, Canada \texttt{ALMAAREVALOLOYOLA@cmail.carleton.ca}}
\and Ahmad Biniaz\thanks{School of Computer Science, University of Windsor, Canada, \texttt{abiniaz@uwindsor.ca}. Research supported in part by NSERC.}
\and  Prosenjit Bose\thanks{School of Computer Science, Carleton University, Ottawa, Canada, \texttt{jit@scs.carleton.ca}. Research supported in part by NSERC.}
\and Thomas Shermer\thanks{School of Computing Science, Simon Fraser University, Canada, \texttt{shermer@sfu.ca}. Research supported in part by NSERC.}
}

\date{}
\newtheorem{lemma}{Lemma}
\newtheorem{corollary}{Corollary}
\newtheorem{proposition}{Proposition}

\newtheorem{theorem}{Theorem}
\newtheorem{observation}{Observation}
\newtheorem*{problem*}{Problem}
\newtheorem*{claim*}{Claim}
\newtheorem*{invariant*}{Invariant}

\newtheorem{definition}{Definition}

\usepackage{graphicx}
\usepackage{mathtools}
\usepackage{amsthm}
\usepackage{algorithm}
\usepackage{algorithmic}
\usepackage{multicol}
\usepackage{enumerate} 
\usepackage{subcaption}
\newcommand{\remove}[1]{}

\begin{document}
\maketitle
\begin{abstract}
A polychromatic $2$-coloring of a triangulation is a $2$-coloring of the vertices such that no face is monochromatic. The discrepancy of a
	coloring is the maximum difference between the sizes of the color
	classes. Asayama and Matsumoto (Graphs and Combinatorics, 2022) proved
	that every triangulation admits a polychromatic $2$-coloring with
	discrepancy at most $\tfrac{5n-16}{9}$, and that there exists a class
	of triangulations for which every polychromatic $2$-coloring has
	discrepancy at least $\tfrac{n}{3} - 2$, where $n$ is the number of
	vertices. We improve the upper bound, showing that every triangulation
	admits a polychromatic $2$-coloring with discrepancy at most
	$\tfrac{3n-16}{7}$ and such a $2$-coloring can be computed in quadratic
	time. We also show a discrepancy of at most $n-\tfrac{4M}{3}$ for
	triangulations with a matching of size $M$. This implies, for example,
	that Delaunay triangulations admit a discrepancy of at most
	$\tfrac{n}{3}$. We provide a linear-time algorithm to compute a
	$2$-coloring whose discrepancy is at most $\tfrac{5n-24}{7}$.

One of our results shows that any proper four coloring with the largest color class of size $\frac{n}{2}$ would imply a $2$-coloring with discrepancy at most $\frac{n}{3}$. The existence of such a proper coloring has been recently confirmed by Kawarabayashi, Yoneda, and Yoneda (arXiv 2026). Therefore the two results together confirm the discrepancy of at most $\frac{n}{3}$ for triangulations.
\end{abstract}
\section{Introduction}
Colorings of planar graphs play a central role in both combinatorial and algorithmic graph theory. The study of colorings that ensure coverage or visibility conditions has received growing attention in recent years, particularly in contexts where each region of a planar subdivision must contain representatives from multiple categories or sensors. For instance, such colorings arise naturally in network coverage problems, geometric guarding, and frequency assignment, where it is desirable that every bounded region has access to all available resources. This motivated the study of \emph{polychromatic colorings}, colorings in which every face of a planar embedding contains at least one vertex of each color. For a survey on polychromatic colorings see \cite{czap2017facially}.

For the case of two colors, a polychromatic $2$-coloring of a triangulation assigns one of two colors to each vertex such that no face is monochromatic. Unlike proper colorings, which aim to separate adjacent vertices, polychromatic colorings of planar graphs emphasize inclusion: every face must {\em see} both colors. Although such colorings always exist, the balance
between the sizes of the two color classes is not guaranteed. 
The {\em discrepancy} of a specific polychromatic $2$-coloring of $G$ is defined as the absolute difference between the sizes of its two color classes. It measures how unevenly the two colors are distributed. For a graph $G$, we define $\mathrm{disc}(G)$ as the minimum discrepancy over all possible polychromatic $2$-colorings of $G$. Thus, $\mathrm{disc}(G)$ captures the best achievable balance between the two colors while maintaining the polychromatic property. A polychromatic coloring of $G$ is said to be \emph{balanced} when its discrepancy is at most one.

Asayama and Matsumoto~\cite{asayama2022balanced} initiated the systematic study of balanced polychromatic $2$-colorings in planar triangulations. Among other results, they proved that every triangulation on $n$ vertices admits a polychromatic $2$-coloring with discrepancy at most $\tfrac{5n-16}{9}$, and they constructed infinite families of triangulations where every such coloring has discrepancy at least $\tfrac{n}{3}-2$. They conjectured that $\tfrac{n}{3}$ is the best possible asymptotic upper bound.

In this work we make progress toward that conjecture. We show that every triangulation on $n$ vertices admits a polychromatic $2$-coloring satisfying
$\mathrm{disc}(G)\le \tfrac{3n-16}{7}$,
which improves the previous bound of $\tfrac{5n-16}{9}$. Our proof refines the discharging arguments used in earlier analyses and integrates structural constraints derived from the Four Color Theorem. Moreover, the proof is constructive and leads to an $O(n\log n)$ algorithm to compute such a coloring.

We further establish that if a triangulation $G$ admits a matching of size $M$, then $G$ has a polychromatic $2$-coloring with discrepancy at most $\tfrac{n-4M}{3}$. This bound is tight for certain triangulations and immediately implies that Delaunay triangulations satisfy $\mathrm{disc}(G)\le \frac{n}{3}$, since they always satisfy having a perfect matching. In addition, we provide a linear-time recoloring procedure that guarantees a polychromatic $2$-coloring with discrepancy at most $\tfrac{5n-24}{7}$. This yields the first efficient algorithm that achieves a nontrivial upper bound on discrepancy in planar triangulations.

Towards our proof of the $\frac{3n-16}{7}$ we prove (in Lemma~\ref{lemma:n_2}) that if a triangulations $G$ admits a proper four coloring with the largest color class of size $\frac{n}{2}$ then $\mathrm{disc}(G)\le \frac{n}{3}$. After the first public appearance of our paper in SWAT 2026, the existence of such a proper coloring was confirmed by Kawarabayashi, Yoneda, and Yoneda \cite{Kawarabayashi2026}.  Therefore the two results together confirm the discrepancy of at most $\frac{n}{3}$. Although this new bound supersedes our previous $\frac{3n-16}{7}$ bound, we decided to keep its proof because it contains elements that are interesting on their own. In particular our vertex classification to essential and non-essential vertices based on alternating separating cycles, as well as the color swapping technique (both presented in Section~\ref{improve-sec}) could be useful for similar problems.

Our results strengthen the connection between structural graph properties and color balance in planar settings, advancing the quantitative understanding of how evenly a triangulation can be $2$-colored while ensuring that every face remains non-monochromatic. 

\section{Preliminaries}

Most of the basic notions about graph coloring can be found in \cite{jensen2011graph} and \cite{tuza2013section}. For general concepts of graph theory, see \cite{bondy2008graph} and \cite{diestel2025graph}. We consider graphs that are finite and simple; that is, they have no loops or multiple edges. We denote a graph by $ G = (V, E)$, where $V$ is the set of vertices, and $E$ is the set of edges.  For planar graphs, we denote by $F$ the set of faces. When needed, we denote by $V(G)$, $E(G)$ and $F(G)$ the sets of vertices, edges and faces to specify the graph to which we refer. We use $n$, $m$, and  $l$ to denote the number of vertices, edges and faces, respectively. The \emph{degree} of a vertex $v \in V$ is the number of vertices adjacent to $v$, and it is denoted by $d(v)$. We denote by $\Delta(G)$ and $\delta(G)$ the maximum and minimum degree of the vertices of $G$, respectively. The degree of a face $f \in F$, denoted by $d(f)$, is the number of edges (or vertices) on its boundary.
A \emph{vertex coloring} $\chi$ of $G$ is a function $\chi: V \rightarrow \{c_1, c_2, ... , c_k\}$. We call it a \emph{$k$-coloring} to emphasize the number of colors. 

\begin{definition}
    Let $G$ be a graph with $n$ vertices, let $\chi$ be a vertex coloring of $G$, and let $\{V_1, V_2, \dots, V_k \}$ be the partition induced by $\chi$. We define the \emph{discrepancy} of $\chi$, denoted as $\operatorname{disc}(\chi)$, to be the difference between the sizes of the largest and the smallest color class:  $$\operatorname{disc}(\chi) = \max\{|V_i| - |V_j| : i, j \in \{ 1, 2, \dots, k\}\}.$$
\end{definition}
We say that a coloring $\chi$ is \emph{balanced} when $\operatorname{disc}(\chi) \leq 1$. Balanced proper colorings are also called \emph{equitable colorings} \cite{meyer1973equitable}.

The following are some known results that we use in the following sections. 
  
\begin{proposition}[Euler’s Formula]
Let $G$ be a connected planar graph with $n$ vertices, $m$ edges, and $l$ faces.
Then $n - m + l = 2$.  
\end{proposition}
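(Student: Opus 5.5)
The plan is the standard induction on the number of edges $m$ for a fixed number of vertices $n$, with the number of faces $l$ tracked along the way.

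\emph{Base case.} Take $m = n-1$. A connected graph with $n$ vertices and $n-1$ edges is a tree. A tree contains no cycles, so its planar embedding has exactly one face, the unbounded one. Hence $l=1$ and $n-m+l = n-(n-1)+1 = 2$.

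\emph{Inductive step.} Suppose $m \ge n$. Then $G$ is not a tree, since a connected graph on $n$ vertices with at least $n$ edges must contain a cycle $C$. Choose an edge $e$ on $C$. In a plane embedding, the Jordan curve theorem applied to $C$ shows that the two sides of $e$ lie in different faces: one face is inside $C$ and the other is outside. Deleting $e$ therefore merges exactly these two faces into one and changes no other face. Because $e$ lies on a cycle, $G-e$ is still connected. It has $n$ vertices, $m-1$ edges and $l-1$ faces. By the induction hypothesis $n-(m-1)+(l-1)=2$, which simplifies to $n-m+l=2$.

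\emph{Main obstacle.} The only step that needs care is the topological claim that an edge on a cycle borders two distinct faces, and that removing it merges them. I would invoke the Jordan curve theorem for the polygonal or simple closed curve formed by $C$. Everything else is bookkeeping.
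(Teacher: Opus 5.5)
Your proof is correct. It is the standard induction on the number of edges, essentially Diestel's proof (Diestel is the textbook the paper cites for the related edge and face counts); the paper itself quotes Euler's formula as background and gives no proof. One small remark: your base case rests on a topological fact of the same kind as the Jordan-curve step you flag, namely that an embedded tree does not separate the plane and so has exactly one face, which should be cited (it is elementary for polygonal arcs) rather than inferred just from the absence of cycles.
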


%\alma{add reference}
\begin{proposition}[Diestel \cite{diestel2025graph}]\label{propEuler}
Every planar graph with $n \geq 3$ vertices has at most $3n-6$ edges and at most $2n-4$ faces. Every planar triangulation with $n \geq 3$ vertices has exactly $3n-6$ edges and $2n-4$ faces. Every planar quadrangulation with $n \geq 3$ vertices has exactly $2n-4$ edges and $n-2$ faces. 
\end{proposition}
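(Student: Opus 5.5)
The plan is to derive all three counts in Proposition~\ref{propEuler} from Euler's Formula together with a double count of edge--face incidences. Throughout we assume the graph is connected; if it is not, we add edges between components, which keeps it planar, only increases $m$, and does not decrease the number of faces. We also take $n\ge 3$, so every face of a simple planar graph is bounded by a closed walk of length at least $3$.

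The first step is the handshake-type identity for faces. Each edge appears exactly twice in the boundary walks of the faces: either it lies on two distinct faces, or it is a bridge and is traversed twice by the walk of a single face. Hence
\[
\sum_{f\in F} d(f) = 2m,
\]
where $d(f)$ is the length of the boundary walk. Since the graph is simple and $n\ge 3$, every face has $d(f)\ge 3$. The small degenerate cases, such as a path on three vertices, where the single face has a walk of length $4$, fit this bound and are checked directly. This gives $3l\le 2m$.

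Next we combine this with Euler's Formula, $l = 2-n+m$. Substituting gives $3(2-n+m)\le 2m$, so $m\le 3n-6$. Feeding this back in, $l = 2-n+m \le 2n-4$.

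For a triangulation every face has degree exactly $3$, so $3l = 2m$ holds with equality. Euler's Formula then yields $m = 3n-6$ and $l = 2n-4$. For a quadrangulation every face has degree $4$, so $4l = 2m$, that is, $m = 2l$. Substituting into $n - m + l = 2$ gives $n - l = 2$, so $l = n-2$ and $m = 2n-4$.

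The only delicate point is justifying $d(f)\ge 3$ and the double count $\sum_{f} d(f)=2m$ when some faces are not bounded by simple cycles, because of bridges or cut vertices. Measuring face degree by the length of the boundary walk handles both issues uniformly, and this is the convention we would state explicitly. Everything else is routine algebra.
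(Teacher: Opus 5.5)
Your argument is correct, but the paper gives no proof to compare it with: the proposition is simply quoted from Diestel. Measured against the textbook argument, both rest on Euler's formula plus the edge--face incidence count, but Diestel organizes it differently. He first proves that every maximal plane graph with at least $3$ vertices is a triangulation, and then counts $2m=3l$ only in triangulations, where every face is a $3$-cycle and every edge lies on exactly two faces. The bound for an arbitrary plane graph follows by extending it to a maximal one, which never decreases $m$ or $l$. Your route counts directly in an arbitrary connected plane graph using boundary-walk lengths. This avoids the maximality characterization, which is the nontrivial lemma in Diestel's version, at the cost of the boundary-walk bookkeeping.

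That cost means two points must be argued rather than asserted. First, $d(f)\ge 3$: a closed boundary walk of length $1$ or $2$ requires a loop, a pair of parallel edges, or a single edge traversed back and forth. The last case forces both endpoints to have degree $1$, so the connected graph is $K_2$, which $n\ge 3$ excludes. Second, for the exact counts you cannot use your add-edges reduction, so say why triangulations and quadrangulations are automatically connected. In a disconnected plane graph some face has a boundary meeting two components, so that face is not bounded by a single $3$- or $4$-cycle.
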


% The concept of independence is relevant in the study of coloring and it is used in several proofs in this paper. 
An \emph{independent set} $I$ of $G$ is a subset of vertices such that no two elements of $I$ are adjacent. Observe that in a proper coloring, every color class is an independent set. The \emph{independence number} of $G$, denoted by $\alpha(G)$, is the size of the maximum independent set.

%Theorem 7. Independence Number of Maximal Planar Graphs, Allan Bickle
\begin{proposition}[Caro and Roditty \cite{caro1985vertex}]\label{propMIS}
Let $G$ be a planar triangulation with $n \geq 4$ vertices and
minimum degree $\delta(G)$. Then $\alpha(G) \leq \frac{2n -4}{\delta(G)}.$
\end{proposition}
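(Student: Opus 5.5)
The plan is a double count of vertex--face incidences between an independent set and the faces of $G$. Since $G$ is a planar triangulation with $n \geq 4$ vertices, Proposition~\ref{propEuler} tells us that $G$ has exactly $2n-4$ faces, and every face is a triangle.

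Let $I$ be a maximum independent set, so $|I| = \alpha(G)$. The proof then has three steps.

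\textbf{Step 1: each vertex lies on $d(v)$ distinct faces.} In a triangulation with $n \geq 4$ vertices, $G$ is $3$-connected. Hence the neighbours of a vertex $v$ form a cycle in the rotation around $v$. Consecutive neighbours together with $v$ bound a triangular face, and these faces are pairwise distinct. So $v$ is incident to exactly $d(v) \geq \delta(G)$ faces.

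\textbf{Step 2: no face contains two vertices of $I$.} The three vertices of any triangular face are pairwise adjacent. A face containing two vertices of $I$ would therefore contain an edge between two vertices of $I$, contradicting independence. Consequently the sets of faces incident to distinct vertices of $I$ are pairwise disjoint.

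\textbf{Step 3: count.} Summing over $I$ gives
\[
|I|\,\delta(G) \;\leq\; \sum_{v \in I} d(v) \;\leq\; |F(G)| \;=\; 2n-4 .
\]
Dividing by $\delta(G)$ yields $\alpha(G) \leq \frac{2n-4}{\delta(G)}$.

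The only step needing any care is Step~1, namely that the $d(v)$ faces around $v$ are distinct. For $n \geq 4$ a triangulation has no separating edges or degenerate rotations, so this follows from the local wheel structure of simple triangulations. This is also why the hypothesis $n \geq 4$ appears in the statement.
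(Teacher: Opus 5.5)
Your proof is correct. The paper gives no proof of this proposition and simply cites Caro and Roditty. Your face-incidence double count is, however, the same argument the paper runs in the warm-up proposition of Subsection~\ref{improve-sec}. There, each vertex of the independent class $V_1$ lies on at least $3$ faces and no face contains two of them, so $3n_1 \le |F_1| \le 2n-4$. That is exactly your Step~3 with $\delta(G)=3$.

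You can avoid the care needed in Step~1 by counting edges instead of faces. Since $I$ is independent, $\sum_{v\in I} d(v)$ equals the number of edges of the spanning bipartite subgraph between $I$ and $V\setminus I$. That subgraph is planar and bipartite, so it has at most $2n-4$ edges. This gives the same bound for every planar graph with $n\ge 3$, not only for triangulations, and needs no facts about the rotation around a vertex.

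One small correction to your closing remark: the hypothesis $n\ge 4$ is not what makes Step~1 work. In $K_3$ each vertex lies on exactly $d(v)=2$ distinct faces, and the inequality $\alpha(K_3)=1\le \frac{2\cdot 3-4}{2}$ holds as well.
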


Given a planar graph $G$ and its planar embedding, the \emph{dual graph} of $G$, denoted by $G^*$ is such that the set of vertices of $G^*$ corresponds to the set of faces of $G$, and two vertices are adjacent if and only if the corresponding two faces share an edge of their boundary. The following claim is implied by Tait's reformulation of the Four Color Theorem \cite{Tait_1880}.

 \begin{proposition}
     Let $G$ be a planar triangulation. The dual graph $G^*$ has a proper $3$-edge-coloring. 
 \end{proposition}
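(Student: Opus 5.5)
We derive the statement from the Four Color Theorem via Tait's correspondence between proper $4$-vertex-colorings of a triangulation and proper $3$-edge-colorings of its dual. Since $G$ is a planar triangulation, the Four Color Theorem gives a proper vertex coloring $\chi: V(G)\to \mathbb{Z}_2\times\mathbb{Z}_2$, using the four elements of the Klein four-group as colors. We then color each edge $uv$ of $G$ by $\chi(u)+\chi(v)$. Because $\chi$ is proper, this sum is never $0$, so each edge receives one of the three nonzero elements $a,b,c$ of $\mathbb{Z}_2\times\mathbb{Z}_2$.

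Next we check that the three edges of any face get three distinct colors. Let a face have vertices $x,y,z$. Its edge colors are $\chi(x)+\chi(y)$, $\chi(y)+\chi(z)$ and $\chi(x)+\chi(z)$, and these sum to $0$. If two of them were equal, the third would be $0$, which is impossible. Hence the three edge colors on each face are distinct. In particular they are exactly $a,b,c$, since these are the only nonzero elements and their sum is indeed $0$.

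Finally we transfer this to the dual. Each edge $e$ of $G$ corresponds to exactly one edge $e^*$ of $G^*$, and we give $e^*$ the color of $e$. The graph $G^*$ is $3$-regular, with each vertex $f^*$ incident to the three duals of the boundary edges of $f$. Because $G$ is a simple triangulation with $n\ge 4$, it is $3$-connected, so $G^*$ is a simple cubic graph. In the trivial case $n=3$ the dual consists of two vertices joined by three parallel edges, which is still $3$-edge-colorable. The three edges at $f^*$ have distinct colors by the previous step, so the coloring of $G^*$ is a proper $3$-edge-coloring.

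The only substantive ingredient is the Four Color Theorem itself, which we invoke as a black box. The rest is routine: the sum-to-zero argument for faces, and confirming that the dual is well-defined and cubic.
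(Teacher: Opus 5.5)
Your proof is correct and follows the same route as the paper, which states this proposition without proof as a consequence of the Four Color Theorem via Tait's reformulation. What you add is the standard Klein four-group argument ($\mathbb{Z}_2\times\mathbb{Z}_2$ vertex colors, edge color $=$ sum of endpoint colors) for the direction the paper needs, and it is carried out correctly, including the degenerate case $n=3$.
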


A \emph{matching} $M$ of $G$ is a subset of edges such that no two elements of $M$ share a common vertex. A \emph{maximum matching} is a matching with maximum cardinality. A \emph{perfect matching} is a matching that includes all vertices of $G$, and a \emph{near-perfect matching} includes all of them, but one. 

\begin{proposition}[Nishizeki and Baybars \cite{nishizeki1979lower}]\label{propMatching}
    Every planar triangulation $G$ with $n$ vertices has a matching of size at least $\frac{n}{3}$.
\end{proposition}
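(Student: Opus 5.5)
The plan is to derive the bound from the Tutte--Berge formula combined with the $3$-connectivity of triangulations. The key is a planarity count on a bipartite minor. For $n=3$ the triangle has a matching of size $1=\frac{n}{3}$, so assume $n\ge 4$; then $G$ is $3$-connected. By Tutte--Berge, the maximum matching has size $\frac{1}{2}\bigl(n-\operatorname{def}(G)\bigr)$, where
\[
\operatorname{def}(G)=\max_{S\subseteq V}\bigl(\operatorname{odd}(G-S)-|S|\bigr).
\]
It therefore suffices to show $\operatorname{odd}(G-S)-|S|\le \frac{n}{3}$ for every $S$. In fact I will aim for the stronger bound $\frac{n-8}{3}$ whenever it is positive.

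\textbf{Trivial cases.} Fix $S$ with $s=|S|$, and let $c$ be the number of components of $G-S$, so that $\operatorname{odd}(G-S)\le c$.
\begin{itemize}
\item If $c\le 1$, then $\operatorname{odd}(G-S)-s\le 1$. This is at most $\frac{n}{3}$ for $n\ge 4$ (indeed $n\ge 3$).
\item If $c\ge 2$, then $S$ is a vertex cut, so $s\ge 3$ by $3$-connectivity.
\end{itemize}

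\textbf{The bipartite minor (case $c\ge 2$).} Build an auxiliary graph $H$ from $G$ in three steps: contract each component of $G-S$ to a single vertex, delete all edges with both ends in $S$, and remove parallel edges. Then $H$ is a simple bipartite planar graph on $s+c\ge 5$ vertices, since it is a minor of the planar graph $G$ with some edges deleted. Each component of $G-S$ is separated from the rest of $G$ by its neighbourhood in $S$. By $3$-connectivity, that neighbourhood has at least $3$ vertices, so every contracted vertex has degree at least $3$ in $H$. Counting edges gives $3c\le |E(H)|$. Euler's formula for bipartite planar graphs gives $|E(H)|\le 2(s+c)-4$. Together these yield $c\le 2s-4$.

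\textbf{Optimisation.} We also have the trivial constraint $s+c\le n$. Maximising $c-s$ subject to $c\le 2s-4$ and $s+c\le n$ gives the optimum at $c=2s-4$ with $3s-4\le n$. Hence
\[
c-s\le s-4\le \frac{n+4}{3}-4=\frac{n-8}{3}.
\]
Therefore $\operatorname{def}(G)\le\max\{1,\frac{n-8}{3}\}$. The maximum matching then has size at least
\[
\frac{1}{2}\Bigl(n-\frac{n-8}{3}\Bigr)=\frac{n+4}{3}
\]
when the second term dominates, and at least $\frac{n-1}{2}\ge\frac{n}{3}$ otherwise (using $n\ge 3$). Either way the bound $\frac{n}{3}$ follows.

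\textbf{Main obstacle.} The step needing care is justifying that $H$ is simple, bipartite and planar, and that every contracted vertex has at least $3$ distinct neighbours in $S$. Planarity holds because contracting connected subgraphs and deleting edges preserves planarity. Distinctness of neighbours is precisely where $3$-connectivity is used. The Euler edge bound for bipartite graphs needs at least $3$ vertices, which holds since $s+c\ge 5$.
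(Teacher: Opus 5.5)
Your argument is correct. The paper gives no proof of this proposition; it simply quotes Nishizeki and Baybars, so there is no internal argument to compare against. What you supply is a self-contained derivation: the Tutte--Berge formula plus an Euler count on the bipartite planar minor obtained by contracting the components of $G-S$. This is essentially the classical way such bounds are obtained for $3$-connected planar graphs.

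Compared with citing the result, your route gives a short, checkable proof tailored to triangulations. It also gives the sharper bound $\nu(G)\ge\min\{\frac{n-1}{2},\frac{n+4}{3}\}$. This is tight: insert a vertex into every face of a triangulation on $k$ vertices, so that $n=3k-4$; the inserted vertices are independent, so every matching edge uses one of the $k$ original vertices. The price is that you use $3$-connectivity essentially, to get $|N(K)|\ge 3$ for every component $K$. The cited paper works under weaker minimum-degree hypotheses, where component neighbourhoods can be small and more work is needed; the paper only needs the triangulation case, so your shortcut suffices.

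Two small points to tidy:
\begin{enumerate}
\item The chain $c-s\le s-4\le\frac{n+4}{3}-4$ is not valid for an arbitrary $S$, since it silently assumes $s\le\frac{n+4}{3}$. Write instead $c-s=\frac{2}{3}(c-2s)+\frac{1}{3}(c+s)\le\frac{2}{3}(-4)+\frac{n}{3}=\frac{n-8}{3}$, which covers every $S$ at once.
\item When contracting components, you must delete the resulting loops as well as parallel edges for $H$ to be simple.
\end{enumerate}
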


\section{Upper Bounds}

In this section, we prove that every planar triangulation has a polychromatic $2$-coloring with discrepancy at most $\frac{3n- 16}{7}$. We first show some bounds related to other parameters of $G$. 

In the first subsection, given a planar triangulation $G$ with a maximum matching $M$, we show how to find a polychromatic $2$-coloring whose discrepancy is at most $n-\frac{4|M|}{3}$.  
In particular, the discrepancy is at most $\frac{n}{3}$ when $G$ has a perfect matching. 

In the second subsection, given a proper $4$-coloring of the triangulation $G$ we show how to partition the four color classes into two sets so that the resulting coloring is polychromatic and its discrepancy is bounded depending on the size of the largest color class in the $4$-coloring. In particular, we guarantee the existence of a polychromatic $2$-coloring with discrepancy at most $\frac{n}{3}$ for graphs with independent number at most $\frac{n}{2}$.
In the last subsection, we extend this idea to improve the general upper bound.

\subsection{Discrepancy and Matchings}
Polychromatic $2$-colorings of triangulations are closely related to spanning quadrangulations and matchings as stated in \cite{asayama2022balanced}, \cite{biedl2001efficient}, and \cite{bose2003worst}. It is known that every triangulation has a spanning quadrangulation, where a {\em quadrangulation} is a plane graph such that each of its faces is a quadrilateral. It is also known that every quadrangulation is bipartite, and hence $2$-colorable.

Let $G$ be a planar triangulation, and let $Q$ be a spanning quadrangulation of $G$. Every face of $Q$ is the union of two adjacent faces of $G$, with their shared edge removed. Moreover, $Q$ contains exactly two boundary edges of each face of $G$. Observe that the unique proper $2$-coloring of $V(Q)$ corresponds to a polychromatic $2$-coloring of $V(G)$.  
We now show how the size of the discrepancy relates to the size of a matching of $Q$ in the following lemma.

\begin{lemma} \label{lem:disc}
 Let $\rho$ be the unique proper $2$-coloring of $Q$ and let $M_Q$ be a matching in $Q$. Then, $\rho$ is a polychromatic $2$-coloring in $G$ with $\operatorname{disc}(\rho) \leq n - 2|M_Q|$.
\end{lemma}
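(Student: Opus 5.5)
Two things need to be shown: that $\rho$ is polychromatic in $G$, and that its discrepancy is at most $n-2|M_Q|$.

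\textbf{Polychromaticity.} Since $Q$ is a quadrangulation it is bipartite, so $\rho$ exists and is unique up to swapping the two colors (as $Q$ is connected). As noted just before the statement, $Q$ contains exactly two boundary edges of every face $f$ of $G$. Each such edge lies in $Q$, so its endpoints get different colors under $\rho$. Hence $f$ contains vertices of both colors, and no face of $G$ is monochromatic. So $\rho$ is a polychromatic $2$-coloring of $G$.

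\textbf{Discrepancy.} Let $A$ and $B$ be the two color classes of $\rho$. Every edge of $Q$ joins a vertex of $A$ to a vertex of $B$, because $\rho$ is a proper coloring of $Q$. So each edge of $M_Q$ covers exactly one vertex of $A$ and one of $B$. Since the edges of $M_Q$ are disjoint, this gives $2|M_Q|$ distinct matched vertices, split evenly: $|M_Q|$ in $A$ and $|M_Q|$ in $B$. The remaining $n-2|M_Q|$ vertices are unmatched and are split between $A$ and $B$ in some way. Therefore
\[
\bigl||A|-|B|\bigr| \;\le\; n-2|M_Q|,
\]
with the worst case being when all unmatched vertices fall in one class. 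This is exactly the claimed bound on $\operatorname{disc}(\rho)$.

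There is no real obstacle here. The only point needing care is the polychromatic property, and it follows directly from the structural fact that $Q$ keeps two edges of each triangular face of $G$. The counting step is immediate once one notes that matching edges are bichromatic.
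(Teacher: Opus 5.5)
Your proposal is correct and follows essentially the same argument as the paper: each edge of $M_Q$ is bichromatic, so both color classes contain exactly $|M_Q|$ matched vertices, and the discrepancy is bounded by the number $n-2|M_Q|$ of unmatched vertices. You also justify polychromaticity explicitly, using the fact that $Q$ keeps two edges of every face of $G$; the paper makes this same observation in the text just before the lemma rather than inside the proof.
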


 \begin{proof}

Let $R$ and $B$ be the color classes in $\rho$.  Every edge in the matching $M_Q$ has one endpoint in each color class. Let $U_R \subseteq R$ and $U_B \subseteq B$ be the unmatched vertices. Then $|R| = |M_Q| + |U_R|$ and $|B| = |M_Q| + |U_B|$. Without loss of generality, assume that $|R| \geq |B|.$
Then, $\operatorname{disc}(\chi) = |R|-|B| = |U_R| - |U_B|\leq |U_R| + |U_B| = n-2|M_Q|$.
 \end{proof}

Now, to define an appropriate spanning quadrangulation with the appropriate matching, the strategy is to start with a matching $M$ in the triangulation $G$, and then show that there exists a spanning quadrangulation containing a large fraction of edges from $M$.

 There is a well-known equivalence between the $4$-coloring of the vertices of a triangulation $G$ and the $3$-coloring of the edges of its dual $G^*$. Such edge-coloring induces a partition of the edges of $G^*$ into 3 perfect matchings of $G^*$. Let $E_1$, $E_2$ and $E_3$ be the sets of edges of $G$ corresponding to each of the 3 perfect matchings in $G^*$. 
 Let $Q_i = G\setminus\{E_i\}$ for $i\in \{1,2,3\}$. We note that each $Q_i$ is a spanning quadrangulation of $G$.  (see e.g.~\cite{asayama2022balanced}, \cite{bose2003worst}, \cite{bose1997guarding}, and \cite{kundgen2017spanning}).

\begin{figure}[ht!]
	\centering
	\setlength{\tabcolsep}{0in}
	$\begin{tabular}{ccc}
		\multicolumn{1}{m{.33\columnwidth}}{\centering\vspace{0pt}\includegraphics[width=.32\columnwidth]{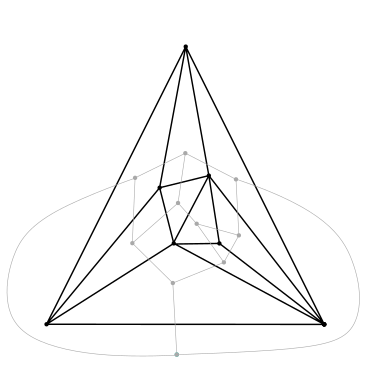}}
		&\multicolumn{1}{m{.33\columnwidth}}{\centering\vspace{0pt}\includegraphics[width=.32\columnwidth]{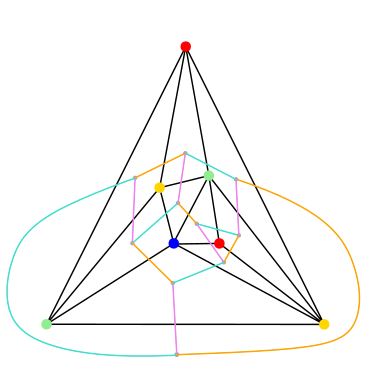}}
        &\multicolumn{1}{m{.33\columnwidth}}{\centering\vspace{0pt}\includegraphics[width=.32\columnwidth]{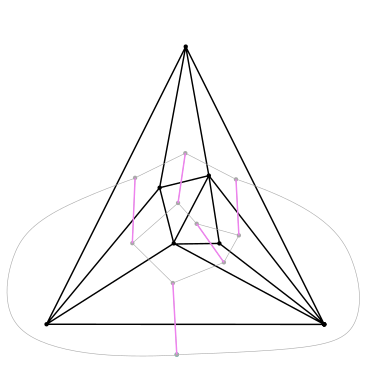}}
		\\
		(a)   &(b) &(c)\\
        \multicolumn{1}{m{.33\columnwidth}}{\centering\vspace{0pt}\includegraphics[width=.32\columnwidth]{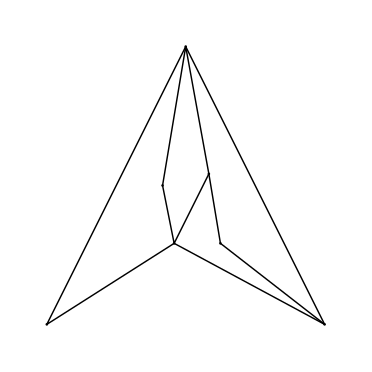}}
		&\multicolumn{1}{m{.33\columnwidth}}{\centering\vspace{0pt}\includegraphics[width=.32\columnwidth]{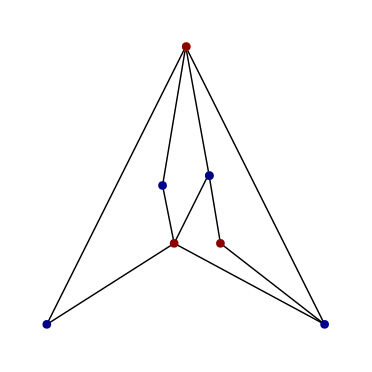}}
        &\multicolumn{1}{m{.33\columnwidth}}{\centering\vspace{0pt}\includegraphics[width=.32\columnwidth]{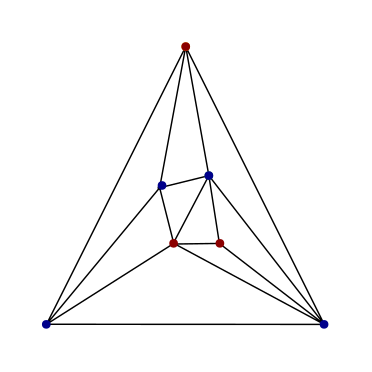}}
		\\
		(d)   &(e) &(f)
	\end{tabular}$	

     \caption{(a) A triangulation $G$ and its dual graph $G^*$. (b) Equivalence between a $4$-coloring on $V(G)$ and a $3$-coloring on $E(G)$. (c) Perfect matching on $G^*$, corresponding to one color class in the 3-edge-coloring. (d) Spanning quadrangulation of $G$ induced by deleting the dual edges of the perfect matching on $G^*$. (e) Proper 2-coloring on $V(Q)$. (f) Polychromatic 2-coloring on $V(G).$} 
   \label{fig:foobar}

\end{figure}

\begin{lemma}\label{lem:23} Let $M$ be a matching in $G$. There exists a spanning quadrangulation of $G$ that has a matching of size at least $\frac{2}{3}|M|$.
\end{lemma}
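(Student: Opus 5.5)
Take the three spanning quadrangulations $Q_1,Q_2,Q_3$ obtained from a proper $3$-edge-coloring of $G^*$, as described just before the statement, and choose the best one by averaging.

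First, the sets $E_1,E_2,E_3$ partition $E(G)$. Every edge of $G$ is dual to exactly one edge of $G^*$, and each edge of $G^*$ receives exactly one of the three colors. Hence every edge $e\in M$ lies in exactly one $E_i$. Since $Q_i = G\setminus E_i$, the edge $e$ is therefore contained in exactly two of the three quadrangulations.

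Next, $M\cap E(Q_i)$ is a subset of a matching, so it is itself a matching in $Q_i$. Summing over $i$ gives
\[
\sum_{i=1}^{3} |M\cap E(Q_i)| \;=\; \sum_{e\in M} \#\{i : e\in E(Q_i)\} \;=\; 2|M|.
\]
By pigeonhole, some index $i$ satisfies $|M\cap E(Q_i)|\ge \tfrac{2}{3}|M|$. That $Q_i$ is a spanning quadrangulation of $G$ containing a matching of size at least $\tfrac{2}{3}|M|$, as required.

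The only nontrivial ingredient is that each $Q_i$ is indeed a spanning quadrangulation. This is stated in the paragraph preceding the lemma: removing a set of edges dual to a perfect matching of $G^*$ merges the faces of $G$ in pairs into quadrilaterals and keeps all vertices. The paper takes this from the cited references, and it follows from the Four Color Theorem via Tait's reformulation. With that fact granted, no step is a real obstacle; the averaging argument is immediate. If an algorithmic version is wanted, one evaluates the three intersections $|M\cap E(Q_i)|$ directly and picks the largest.
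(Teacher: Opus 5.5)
Your proposal is correct and matches the paper's argument. Each edge of $M$ lies in exactly one $E_i$, hence in exactly two of the $Q_i$; so $\sum_{i} |M\cap E(Q_i)| = 2|M|$, and averaging gives some $Q_i$ containing at least $\tfrac{2}{3}|M|$ edges of $M$, with your partition and double-counting steps spelling out what the paper compresses into one sentence.
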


 \begin{proof}
     Every edge of $M$ appears by construction in exactly 2 of the $Q_i$'s, which implies that one of the $Q_i$'s contains at least $\frac{2}{3}|M|$ edges of $M$.
 \end{proof}

The following is the main result of this subsection.

\begin{theorem}\label{thm:disc_matching}
Every triangulation $G$ with $n$ vertices has a polychromatic $2$-coloring with discrepancy at most $n-\frac{4}{3}|M|$, where $M$ is
a matching in $G$.
\end{theorem}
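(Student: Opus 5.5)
The proof combines Lemma~\ref{lem:23} with Lemma~\ref{lem:disc}. Let $M$ be any matching in $G$. We may assume $n\ge 4$; the small cases are trivial. Fix a proper $4$-coloring of $V(G)$, which exists by the Four Color Theorem. Via Tait's correspondence, this gives a proper $3$-edge-coloring of the dual $G^*$. Let $E_1,E_2,E_3$ be the primal edge sets corresponding to the three perfect matchings of $G^*$. Each $Q_i=G\setminus E_i$ is then a spanning quadrangulation of $G$.

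The counting step is the one in Lemma~\ref{lem:23}. Each edge of $G$ lies in exactly one $E_i$, so it belongs to exactly two of the $Q_i$'s. Hence
\[
\sum_{i=1}^{3}|M\cap E(Q_i)| = 2|M|,
\]
and some $Q=Q_i$ satisfies $|M\cap E(Q)|\ge \tfrac{2}{3}|M|$. Set $M_Q=M\cap E(Q)$. A subset of a matching is a matching, so $M_Q$ is a matching in $Q$.

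Next, apply Lemma~\ref{lem:disc} to $Q$ and $M_Q$. Every quadrangulation is bipartite, so $Q$ has a unique proper $2$-coloring $\rho$. Because $Q$ contains two boundary edges of every face of $G$, each face of $G$ has a bichromatic edge under $\rho$. So $\rho$ is polychromatic in $G$, and
\[
\operatorname{disc}(\rho)\le n-2|M_Q|\le n-2\cdot\tfrac{2}{3}|M| = n-\tfrac{4}{3}|M|,
\]
which is the claimed bound.

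Given the earlier results, no step here is really an obstacle. The only points needing care are that $M\cap E(Q)$ is a matching of $Q$ itself, not just of $G$, and that each $Q_i$ is a genuine spanning quadrangulation. The latter requires $E_i$ to be dual to a perfect matching of $G^*$, so that every face of $G$ loses exactly one edge and the faces pair up into quadrilaterals.
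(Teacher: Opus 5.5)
Your proof is correct and follows the paper's argument: choose the spanning quadrangulation $Q_i$ that keeps at least $\tfrac{2}{3}|M|$ edges of $M$ (Lemma~\ref{lem:23}), then apply Lemma~\ref{lem:disc} to its proper $2$-coloring. You also make explicit two details the paper leaves implicit, namely that $M\cap E(Q)$ is a matching of $Q$ itself and that each $Q_i$ is a genuine spanning quadrangulation.
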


\begin{proof}
 By Lemma~\ref{lem:23}, $G$ has a a spanning quadrangulation $Q$ that has a matching $M_Q$ of size at least $\frac{2}{3}|M|$. By Lemma~\ref{lem:disc} the proper $2$-coloring of $Q$ is a polychromatic $2$-coloring of $G$ with discrepancy at most   $n-2|M_Q|\leq n-\frac{4}{3}|M|$.
\end{proof}

\begin{corollary}\label{matchings}
 Every planar triangulation $G$ has a polychromatic $2$-coloring  whose discrepancy is at most $\frac{5n - 16}{9}$.   
\end{corollary}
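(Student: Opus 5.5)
The plan is to apply Theorem~\ref{thm:disc_matching} with a matching $M$ of $G$ that is as large as possible. The bound $|M|\ge \frac{n}{3}$ from Proposition~\ref{propMatching} only yields $n-\frac{4n}{9}=\frac{5n}{9}$, so the additive $-\frac{16}{9}$ must come from a slightly sharper matching bound. Solving $n-\frac{4}{3}|M|\le\frac{5n-16}{9}$ shows we need $|M|\ge\frac{n+4}{3}$. This is the form in which Nishizeki and Baybars actually state their result: every $3$-connected planar graph on $n\ge 14$ vertices has a matching of size at least $\frac{n+4}{3}$. Proposition~\ref{propMatching} is the weaker version quoted earlier, so the corollary relies on the sharper statement from the same source rather than on Proposition~\ref{propMatching} as stated.

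First, for $n\ge 14$, note that every planar triangulation with $n\ge 4$ vertices is $3$-connected. Hence $G$ has a matching $M$ with $|M|\ge\frac{n+4}{3}$. Theorem~\ref{thm:disc_matching} then gives discrepancy at most $n-\frac{4}{3}\cdot\frac{n+4}{3}=\frac{5n-16}{9}$. If useful, I would also note that Lemma~\ref{lem:23} gives $|M_Q|\ge\lceil\frac{2}{3}|M|\rceil$ by integrality, which only helps.

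Second, for $n\le 13$, I would argue separately. The cases $n=3$ (a single triangle, colored $2$ versus $1$) and $n=4$ ($K_4$, colored $2$--$2$, discrepancy $0\le\frac{4}{9}$) are checked directly. For larger small $n$, I would use that every triangulation on at most $10$ vertices is Hamiltonian, since the smallest non-Hamiltonian triangulation has $11$ vertices. Such a graph therefore has a matching of size $\lfloor n/2\rfloor$, and then $n-\frac{4}{3}\lfloor n/2\rfloor\le\frac{5n-16}{9}$ reduces to an elementary check for $5\le n\le 10$. Where the check is tight, I would use the fact that the discrepancy is an integer with the same parity as $n$, which lets the real-valued bound be rounded down.

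For $n=11,12,13$, I would use a direct argument that a triangulation has a matching of size at least $\lceil (n+4)/3\rceil$, e.g.\ via Tutte–Berge with the standard counting of odd components against faces. Failing that, I would fall back on the Tutte–Berge bound for $3$-connected planar graphs, which gives essentially the same estimate. The main obstacle is exactly this: invoking the sharp $(n+4)/3$ matching bound and cleanly covering the small-$n$ exceptions. The rest is the one-line computation above.
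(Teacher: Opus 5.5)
Your main observation is correct, and it is more careful than the paper's own one-line proof. The paper combines Theorem~\ref{thm:disc_matching} with Proposition~\ref{propMatching}, but $|M|\ge\frac n3$ only yields $n-\frac{4n}{9}=\frac{5n}{9}$. The $-\frac{16}{9}$ therefore really does require $|M|\ge\frac{n+4}{3}$, and the Nishizeki--Baybars bound for $3$-connected planar graphs is the right repair for large $n$.

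You do not even need that citation, because the Tutte--Berge count you propose for $n=11,12,13$ works uniformly. For $|S|\ge 3$, $G-S$ has at most $2|S|-4$ components: each component sees at least three vertices of $S$, and contracting the components gives a planar bipartite graph. Hence $n+|S|-\mathrm{odd}(G-S)\ge\max(n-|S|+4,\,2|S|)\ge\frac{2(n+4)}{3}$. For $|S|\le 2$ the same quantity is at least $n-1$. So $\nu(G)\ge\frac{n+4}{3}$ for every triangulation with $n\ge 11$.

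The small cases contain two real errors.

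First, $n=3$ cannot be ``checked directly''. Every polychromatic coloring of a triangle splits $2$--$1$, giving discrepancy $1>\frac{5\cdot 3-16}{9}=-\frac19$. The statement is false there and must be restricted to $n\ge 4$, as in Corollary~\ref{corollaryUpper}.

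Second, your elementary check for $5\le n\le 10$ fails at $n=6$ and $n=7$. With $|M|=3$, the matching route gives discrepancy at most $2$ and $3$ respectively; rounding $|M_Q|$ up via Lemma~\ref{lem:23} gives the same values. Parity does not help, since $2$ and $3$ already have the right parity but exceed $\frac{14}{9}$ and $\frac{19}{9}$. You need discrepancy $0$ and $1$. Because $\nu(G)\le 3$ for these $n$, no matching bound can deliver this through Theorem~\ref{thm:disc_matching}.

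For these two cases, switch to the four-coloring argument. By Proposition~\ref{propMIS}, $\alpha(G)\le\lfloor\frac{2n-4}{3}\rfloor$.
\begin{itemize}
\item For $n=7$, every color class has size at most $3<\frac n2$. Lemma~\ref{lemma:n_2} then gives discrepancy at most $\frac73$, hence $1$ by parity.
\item For $n=6$, every class has size at most $2$. If only three colors are used, recolor one vertex with the fourth color. The class sizes are then $(2,2,1,1)$, and $V_1\cup V_4$ versus $V_2\cup V_3$ is a polychromatic $3$--$3$ split.
\end{itemize}
The cases $n=4,5$ and $8\le n\le 10$ go through as you describe.
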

\begin{proof}

Follows from Theorem \ref{thm:disc_matching} with the matching that is   guaranteed by Proposition \ref{propMatching}.
%By Proposition \ref{propMatching} and Theorem \ref{thm:disc_matching}, $G$ has a polychromatic $2$-coloring whose discrepancy is at most $n - \frac{4}{3}\left(\frac{n}{3}\right) = n - \frac{4n}{9} = \frac{5n}{9}$. 
\end{proof}

A relevant application of Theorem \ref{thm:disc_matching} is to obtain a better upper bound for the discrepancy of polychromatic $2$-colorings in Delaunay triangulations, stated in the following Corollary. 

\begin{corollary}
Let $D$ a graph on $n$ vertices corresponding to the Delaunay triangulation of a set of $n$ points in the plane. $D$ has a polychromatic $2$-coloring whose discrepancy is at most $\frac{n}{3}$.
\end{corollary}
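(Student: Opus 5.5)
The plan is to apply Theorem~\ref{thm:disc_matching} directly, using the known fact that Delaunay triangulations have large matchings. The key external ingredient is the result of Dillencourt that every Delaunay triangulation of a point set in general position is 1-tough, and in particular contains a perfect matching when $n$ is even and a near-perfect matching when $n$ is odd. So the first step is to cite this: $D$ has a matching $M$ with $|M| = \lfloor n/2 \rfloor$.

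The second step is to note that a Delaunay triangulation of $n$ points (with the outer face triangulated, or points in convex-position degeneracies handled) is a planar triangulation on the same vertex set. This lets us invoke Theorem~\ref{thm:disc_matching}. If the Delaunay graph is not fully triangulated (for instance, the outer face is bounded by the convex hull), I would first add edges to triangulate the remaining faces. Adding edges keeps $M$ a matching in the resulting triangulation $G$. Any polychromatic $2$-coloring of $G$ is then non-monochromatic on every triangle of $D$, since every bounded face of $D$ is a face of $G$.

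Finally, Theorem~\ref{thm:disc_matching} gives discrepancy at most $n - \tfrac{4}{3}\lfloor n/2\rfloor$. For even $n$ this equals $n - \tfrac{2n}{3} = \tfrac{n}{3}$. For odd $n$ it equals $n - \tfrac{2(n-1)}{3} = \tfrac{n+2}{3}$. The discrepancy has the same parity as $n$, and the bound must be an integer that is at most $\tfrac{n+2}{3}$, so I expect to sharpen the odd case with a parity argument. Alternatively, one can state the bound as $\tfrac{n}{3}+O(1)$, which is what the abstract indicates.

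The main obstacle is the odd case and degenerate point sets, where the rounding leaves the bound slightly above $\tfrac{n}{3}$. I would first attempt the sharpening through Lemma~\ref{lem:disc} together with the parity observation. Failing that, I would accept the bound of $\tfrac{n}{3}$ as stated for the perfect-matching (even) case.
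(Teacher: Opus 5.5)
Your route is the paper's: cite the perfect-matching result for Delaunay triangulations (Dillencourt; Biniaz) and plug it into Theorem~\ref{thm:disc_matching}. For even $n$ your argument is complete. Your preliminary step is correct and is a detail the paper's one-line proof leaves implicit: you triangulate the outer (convex-hull) face, note that $M$ stays a matching, and note that every bounded face of $D$ survives as a face of $G$.

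The genuine gap is the odd case, and your parity sharpening does not close it in general. With $|M|=\lfloor n/2\rfloor$, Lemma~\ref{lem:23} gives a matching of size $\lceil \tfrac23\lfloor n/2\rfloor\rceil$ in some $Q_i$, and Lemma~\ref{lem:disc} then yields the following:
\begin{itemize}
\item For $n=6k+3$ you get $|M_Q|\ge 2k+1$, so discrepancy at most $2k+1=\tfrac n3$.
\item For $n=6k+5$ you get $|M_Q|\ge 2k+2$, so discrepancy at most $2k+1<\tfrac n3$. In these two classes integrality, or equivalently your parity remark, suffices.
\item For $n=6k+1$ you only get $|M_Q|\ge 2k$, so discrepancy at most $2k+1=\tfrac{n+2}{3}$. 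This bound is odd, matching the parity of $n$, so parity removes nothing, and $2k+1>\tfrac n3$.
\end{itemize}
The argument already breaks at $n=7$: a near-perfect matching of size $3$ guarantees only $2$ matching edges in the best $Q_i$, giving the bound $3>\tfrac73$. Your fallbacks ($\tfrac n3+O(1)$, or restricting to even $n$) prove a weaker statement than the one claimed.

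The paper's own proof has the same blind spot, since it substitutes $|M|=\tfrac n2$. The matching route needs an extra ingredient for $n\equiv 1\pmod 6$, such as an augmenting-path argument inside $Q_i$. Alternatively, after triangulating the outer face you can bypass matchings entirely. Apply Lemma~\ref{lemma:n_2} to a proper $4$-coloring whose largest class has size at most $\tfrac n2$; the existence of such a coloring is the Kawarabayashi--Yoneda--Yoneda result cited in the introduction. That gives discrepancy at most $\tfrac n3$ for every $n$.
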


\begin{proof}
    It is known that Delaunay triangulations have perfect matchings \cite{Biniaz2021,dillencourt1990}. Then by Theorem \ref{thm:disc_matching}, $D$ has a polychromatic $2$-coloring whose discrepancy is at most $n - \frac{4}{3}\left(\frac{n}{2}\right)$. \end{proof}

Observe that the bound $\frac{n}{3}$ is true for any triangulation with a perfect matching.

 %\alma{añadir cosas de Delaunay en Preliminaries}

\subsection{Discrepancy and Independent Sets}\label{sub:independent}

In this section, the strategy is to start with a proper $4$-coloring of the triangulation, and define a $2$-coloring that is polychromatic whose discrepancy is bounded according to the sizes of the color classes in the $4$-coloring.
This idea was used in \cite{asayama2022balanced} to prove the general bound. We extend the idea to relate the discrepancy to the size of the maximum independent set, which generalizes two of their main results and Theorem \ref{thm:disc_matching}.
%Esto va a ser usado para lo que sigue:

Let $G$ be a planar triangulation with $n$ vertices.  Let $\chi: V(G) \rightarrow \{1, 2, 3, 4\}$ be a $4$-proper coloring of $V(G)$. Let $V_1, V_2, V_3, V_4$ be the respective color classes and let $n_1$, $n_2$, $n_3$, $n_4$ be their respective sizes.
We note that $n_1 + n_2 + n_3 + n_4 = n$. Without loss of generality, assume $n_1 \geq n_2 \geq n_3 \geq n_4$. Since $n_2 + n_3 + n_4  =  n - n_1$ we get the following inequality:

\begin{observation}\label{obs:smallerclass} $n_4  \leq  \frac{n - n_1}{3}.$
\end{observation}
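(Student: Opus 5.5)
The claim $n_4 \leq \frac{n-n_1}{3}$ is an averaging statement about the three smallest color classes. I will use only two facts: the partition identity $n_1+n_2+n_3+n_4=n$, and the ordering $n_1\ge n_2\ge n_3\ge n_4$ that we fixed without loss of generality.

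First, I subtract $n_1$ from both sides of the identity. This gives $n_2+n_3+n_4=n-n_1$.

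Next, the ordering gives $n_4\le n_3$ and $n_4\le n_2$. Adding these to the trivial $n_4\le n_4$ yields
\[
3n_4 \;\le\; n_2+n_3+n_4 \;=\; n-n_1 .
\]
Dividing by $3$ gives exactly $n_4\le \frac{n-n_1}{3}$. In words, the minimum of three numbers is at most their average.

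No step is a genuine obstacle. The only point to check is that the ordering assumption is legitimate. It is, because we may relabel the four colors of $\chi$ freely, and relabeling does not affect properness of the coloring or the multiset of class sizes. Note also that the argument never uses properness or planarity; it holds for any partition of $V(G)$ into four parts sorted by size.
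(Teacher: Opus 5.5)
Your proof is correct and is the same argument the paper gives. The paper notes $n_2+n_3+n_4=n-n_1$ and, using the ordering $n_2\ge n_3\ge n_4$, concludes that the smallest of these three classes is at most their average.
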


We describe a $2$-coloring $\rho$ of $V(G)$, whose color classes are $R$, $B$ (sometimes referred to as the colors red and blue), and $r$, $b$ their respective sizes. The discrepancy of $\rho$ is defined as  $\operatorname{disc}(\rho) = |r - b|$ with $r + b = n $. Therefore, $\operatorname{disc}(\rho) = |2r - n|$.  We consider two cases to eliminate the absolute value.

\begin{observation}\label{obs:disc_twocases}

If $r \geq \frac{n}{2}$, then 
    $\operatorname{disc}(\rho)  =    2r - n .$
If $r < \frac{n}{2}$, then $\operatorname{disc}(\rho)  =   n - 2r.$
    % \begin{itemize}
    %     \item If $r \geq \frac{n}{2}$, then 
    % $\operatorname{disc}(\rho)  =    2r - n .$
    % \item  If $r < \frac{n}{2}$, then $\operatorname{disc}(\rho)  =   n - 2r.$
    % \end{itemize}

\end{observation}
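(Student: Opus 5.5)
The statement is elementary, so the plan is a direct computation from the definitions. We have $r+b=n$ and $\operatorname{disc}(\rho)=|r-b|$. Substituting $b=n-r$ gives $\operatorname{disc}(\rho)=|r-(n-r)|=|2r-n|$, which is the identity already noted before the observation.

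Next we split on the sign of $2r-n$.

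\emph{Case $r\ge \frac n2$.} Here $2r-n\ge 0$, so $|2r-n|=2r-n$.

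\emph{Case $r<\frac n2$.} Here $2r-n<0$, so $|2r-n|=n-2r$.

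These two cases are exhaustive, which gives both claimed formulas.

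There is no genuine obstacle. The only point needing care is that the discrepancy of a $2$-coloring, defined as $\max\{|V_i|-|V_j|\}$ over ordered pairs, equals $|r-b|$. This holds because the maximum over the pairs $(R,B)$ and $(B,R)$ picks out the nonnegative one of $r-b$ and $b-r$. I would state this in one line and then substitute $b=n-r$.
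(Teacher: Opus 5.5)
Your proposal is correct and matches the paper's reasoning: the paper notes $\operatorname{disc}(\rho)=|r-b|=|2r-n|$ from $r+b=n$ and then splits on the sign of $2r-n$ to remove the absolute value. Your one-line check that the general $\max\{|V_i|-|V_j|\}$ definition reduces to $|r-b|$ for two colors is a welcome extra detail that the paper leaves implicit.
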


\begin{lemma}\label{lemma:n_2}
If $n_1 + n_4 \le \frac{2n}{3}$ or $n_1 \le \frac{n}{2}$, then $G$ has a polychromatic $2$-coloring whose discrepancy is at most $\frac{n}{3}$.
\end{lemma}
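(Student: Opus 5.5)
The plan is to obtain the required $2$-coloring by merging the four color classes of $\chi$ into two pairs, and then to pick the pairing that gives the smallest discrepancy.

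\textbf{Step 1: every $2$+$2$ merge is polychromatic.} Every face of $G$ is a triangle, and $\chi$ is proper, so each face contains three distinct colors out of $\{1,2,3,4\}$. Suppose the four classes are split into two pairs, say $R=V_i\cup V_j$ and $B=V_k\cup V_l$. Each side carries only two colors, so no face can have all three of its colors on one side. Hence every such coloring $\rho$ is polychromatic. I will use the pairing $R=V_1\cup V_4$, $B=V_2\cup V_3$, with $r=n_1+n_4$. By Observation~\ref{obs:disc_twocases}, it suffices to show $\frac{n}{3}\le r\le \frac{2n}{3}$. This is because $r\ge \frac n2$ gives discrepancy $2r-n$, and $r<\frac n2$ gives discrepancy $n-2r$.

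\textbf{Step 2: the lower bound $r\ge \frac n3$ holds unconditionally.} Suppose instead that $n_1+n_4<\frac n3$. Then $n_2+n_3=n-(n_1+n_4)>\frac{2n}{3}$, so $n_2>\frac n3$. Since $n_1\ge n_2$, we get $n_1>\frac n3$, which contradicts $n_1+n_4<\frac n3$. So whenever $r<\frac n2$, the discrepancy $n-2r$ is at most $\frac n3$, and neither hypothesis of the lemma is needed here.

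\textbf{Step 3: the upper bound $r\le\frac{2n}{3}$ in the case $r\ge\frac n2$.} Here the discrepancy is $2r-n$, and we need $r\le \frac{2n}{3}$.
\begin{itemize}
\item Under the first hypothesis, this is exactly $n_1+n_4\le \frac{2n}{3}$.
\item Under the second hypothesis, Observation~\ref{obs:smallerclass} gives $n_1+n_4\le n_1+\frac{n-n_1}{3}=\frac{2n_1+n}{3}$. With $n_1\le \frac n2$ this is at most $\frac{2n}{3}$.
\end{itemize}
In both cases the discrepancy is at most $\frac n3$.

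The argument is short throughout. The only step that needs care is Step 2, the unconditional lower bound on $n_1+n_4$. It relies on the ordering $n_1\ge n_2\ge n_3$ to rule out the case where $V_1\cup V_4$ is too small.
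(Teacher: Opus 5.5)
Your proposal is correct and follows essentially the same argument as the paper. Both use the pairing $R=V_1\cup V_4$, $B=V_2\cup V_3$ and get the unconditional lower bound $r\ge \frac n3$ from the ordering of the class sizes (the paper argues directly rather than by contradiction). Both then apply Observation~\ref{obs:smallerclass} to get $r\le\frac{2n}{3}$ when $n_1\le\frac n2$.
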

 \begin{proof}
 Let $\rho$ be the $2$-coloring that assigns $R = V_1 \cup V_4$ and $B = V_2 \cup V_3$. Then $r = n_1 + n_4$ and $ b= n_2 + n_3$. 
We claim that $\rho$ is a polychromatic $2$-coloring. Since every face of $G$ is triangular, it contains vertices from 3 different color classes of the $4$-coloring $\chi$.  
%Every color class of $\rho$ is the union of two color classes of the proper $4$-coloring. 
Therefore, in $\rho$ every face contains at least one red and one blue vertex. 

 Since $n_1 + n_4 \geq n_2 \geq n_3$, we have that 
 $r = n_1 + n_4 \geq \frac{n}{3}$. 
  Thus, if $n_1 + n_4 \le \frac{2n}{3}$ then $\operatorname{disc}(\rho)  = |2r-n|\le \frac{n}{3}$.

 If $n_1\le\frac{n}{2}$ then by applying Observation~\ref{obs:smallerclass} we get $n_2+n_3=n-n_1-n_4\ge \frac{2n-2n_1}{3}\ge \frac{n}{3}$. Once again we get then $\operatorname{disc}(\rho)  = |2r-n|\le \frac{n}{3}$.
 
% & \leq & \frac{n}{3}$
%  \begin{eqnarray*}
% \operatorname{disc}(\rho') & \leq & n- 2\left(\frac{n}{3}\right)\\
% & \leq & \frac{n}{3}.
% \end{eqnarray*}
 \end{proof}

\begin{lemma}\label{lemma:n_1}
If $n_1 + n_4 \geq \frac{n}{2}$, then $G$ has a polychromatic $2$-coloring whose discrepancy is at most $\frac{4n_1 - n}{3}$.
\end{lemma}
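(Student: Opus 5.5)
The plan is to reuse the same $2$-coloring as in the proof of Lemma~\ref{lemma:n_2}: let $\rho$ assign $R = V_1 \cup V_4$ and $B = V_2 \cup V_3$. Polychromaticity holds for the same reason as before. Each triangular face contains vertices of three distinct classes of the proper $4$-coloring $\chi$. Any three of the four classes include at least one of $V_1, V_4$ and at least one of $V_2, V_3$, so no face is monochromatic in $\rho$. Nothing new is needed for this part.

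For the discrepancy, we have $r = n_1 + n_4$. The hypothesis $n_1 + n_4 \geq \frac{n}{2}$ says exactly that $r \geq \frac{n}{2}$, so by Observation~\ref{obs:disc_twocases} we get $\operatorname{disc}(\rho) = 2r - n = 2n_1 + 2n_4 - n$. This is where the hypothesis is used: it removes the absolute value in the favorable direction.

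It remains to bound $n_4$ in terms of $n_1$. Here we invoke Observation~\ref{obs:smallerclass}, $n_4 \le \frac{n-n_1}{3}$, which gives
\[
\operatorname{disc}(\rho) \le 2n_1 + \frac{2(n-n_1)}{3} - n = \frac{4n_1 - n}{3}.
\]
There is no real obstacle. The only point needing care is the case split for the absolute value, and the hypothesis $n_1+n_4\ge \frac n2$ settles it. The bound is only informative when $n_1 > \frac{n}{2}$; otherwise Lemma~\ref{lemma:n_2} already gives $\frac n3$.
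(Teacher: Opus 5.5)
Your proposal is correct and follows the paper's proof: the same coloring $R = V_1 \cup V_4$, $B = V_2 \cup V_3$, with the hypothesis $n_1+n_4 \ge \frac{n}{2}$ giving $\operatorname{disc}(\rho) = 2r-n$. The paper then applies Observation~\ref{obs:smallerclass} in the form $r \le \frac{n+2n_1}{3}$ to reach $\frac{4n_1-n}{3}$, which is the same computation you do.
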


\begin{proof}

 Let $\rho$ be the polychromatic 2-coloring defined in the proof of Lemma \ref{lemma:n_2}.
By observation \ref{obs:smallerclass}, we know that $r  = n_1 + n_4  \leq  \frac{n + 2n_1}{3}$.
% \begin{equation}\label{eqn:upper}
% r  = n_1 + n_4  \leq  \frac{n + 2n_1}{3}\\
% \end{equation} 
Since $r = n_1 + n_4 \geq \frac{n}{2}$, we have $\operatorname{disc}(\rho)  =    2r - n .$ We conclude that \begin{equation}\notag\operatorname{disc}(\rho)  \leq   2\left(\frac{n + 2n_1}{3}\right) - n \leq  \frac{4n_1 - n}{3}.\end{equation}
\end{proof}

% \begin{eqnarray*}
% \operatorname{disc}(\rho') & \leq &  2\left(\frac{n + 2n_1}{3}\right) - n\\
% & \leq & \frac{4n_1 - n}{3}.
% \end{eqnarray*}

\begin{theorem}\label{thmDiscrepancy}
Let $G$ be a planar triangulation with $n$ vertices. If $G$ has a proper $4$-coloring $\chi$ whose largest color class has size at most $s$, then $G$ has a polychromatic $2$-coloring $\rho$ such that $\operatorname{disc}(\rho) \leq \max\left(\frac{n}{3}, \frac{4s - n}{3}\right)$.   
\end{theorem}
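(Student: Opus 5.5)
We will use the 2-coloring $\rho$ with $R = V_1 \cup V_4$ and $B = V_2 \cup V_3$ from the proof of Lemma~\ref{lemma:n_2}, where the classes of $\chi$ are ordered so that $n_1 \ge n_2 \ge n_3 \ge n_4$ and $n_1 \le s$. As shown there, $\rho$ is polychromatic because every triangular face receives three distinct colors of $\chi$, so at least one of them lies in each of $R$ and $B$. It remains to bound $\operatorname{disc}(\rho)$, and we split into two cases according to whether $n_1 + n_4 \ge \frac{n}{2}$.

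\emph{Case $n_1 + n_4 \ge \frac{n}{2}$.} Lemma~\ref{lemma:n_1} gives $\operatorname{disc}(\rho) \le \frac{4n_1 - n}{3}$. Since $n_1 \le s$, this is at most $\frac{4s-n}{3}$.

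\emph{Case $n_1 + n_4 < \frac{n}{2}$.} Then $n_1 + n_4 \le \frac{2n}{3}$, so the first hypothesis of Lemma~\ref{lemma:n_2} holds. From its proof, $r = n_1 + n_4 \ge \frac{n}{3}$, which together with $r < \frac{n}{2}$ gives $\operatorname{disc}(\rho) = n - 2r \le \frac{n}{3}$.

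In either case $\operatorname{disc}(\rho) \le \max\left(\frac{n}{3}, \frac{4s-n}{3}\right)$. The only delicate point is to apply the lemmas to this single coloring $\rho$, so that their bounds apply to $\rho$ rather than to some unspecified coloring, and to replace $n_1$ by the upper bound $s$ only after Lemma~\ref{lemma:n_1} has been applied. This replacement is valid because $\frac{4n_1 - n}{3}$ is increasing in $n_1$.
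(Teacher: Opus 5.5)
Your proof is correct and is the paper's argument made explicit. The paper's one-line proof notes that the class sizes satisfy the hypothesis of Lemma~\ref{lemma:n_2} (when $n_1+n_4<\frac{n}{2}\le\frac{2n}{3}$) or of Lemma~\ref{lemma:n_1} (otherwise), and then uses $n_1\le s$; your insistence on a single coloring $\rho$ is harmless but unnecessary, since the statement only asks for existence.
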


\begin{proof}
The sizes of the color classes in $\chi$ meet the conditions of Lemma \ref{lemma:n_2} or \ref{lemma:n_1}.
\end{proof}

% Observe that Theorem \ref{thmDiscrepancy} is a generalization of two previous results (\emph{Theorem 7 and Theorem 4 in \cite{asayama2022balanced}}), which can be condensed in the following corollary.
\begin{observation}\label{obs:independentnumber}
  By the Four Color Theorem, every triangulation has a proper $4$-coloring. Since the color classes are independent sets, Theorem \ref{thmDiscrepancy} can always be applied with $s = \alpha(G)$, the independence number of $G$.  
\end{observation}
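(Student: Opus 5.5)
The statement just above is Observation~\ref{obs:independentnumber}: every triangulation has a proper $4$-coloring, and Theorem~\ref{thmDiscrepancy} applies with $s=\alpha(G)$. The plan has two short steps.

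\textbf{Step 1 (existence of a $4$-coloring).} Invoke the Four Color Theorem to get a proper $4$-coloring $\chi$ of $V(G)$. Equivalently, use Tait's reformulation stated earlier: a proper $3$-edge-coloring of $G^*$ yields a proper $4$-vertex-coloring of $G$.

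\textbf{Step 2 (bound the largest class and conclude).} Let $V_1,\dots,V_4$ be the color classes of $\chi$. Since $\chi$ is proper, no edge has both endpoints in the same class, so each $V_i$ is an independent set. By definition of the independence number, $|V_i|\le \alpha(G)$ for every $i$. Hence the largest class has size at most $s=\alpha(G)$, which is exactly the hypothesis of Theorem~\ref{thmDiscrepancy}. That theorem then gives a polychromatic $2$-coloring $\rho$ with
\[
\operatorname{disc}(\rho)\le \max\left(\frac{n}{3},\frac{4\alpha(G)-n}{3}\right).
\]

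There is no real obstacle. The only point needing care is the observation that color classes of a proper coloring are independent sets; the rest is direct substitution.

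As a sanity check, combining this with Proposition~\ref{propMIS} gives $\alpha(G)\le \frac{2n-4}{\delta(G)}$. For example, if $\delta(G)\ge 4$ then $\alpha(G)\le \frac{n-2}{2}$, and the bound becomes $\operatorname{disc}(\rho)\le \frac{n}{3}$.
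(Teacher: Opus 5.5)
Your proposal is correct and follows essentially the same argument as the paper: the Four Color Theorem supplies a proper $4$-coloring, and since each color class is an independent set, the largest class has size at most $\alpha(G)$, so Theorem~\ref{thmDiscrepancy} applies with $s=\alpha(G)$. Your sanity check with $\delta(G)\ge 4$ also matches the paper's use of this observation in Corollary~\ref{corollaryUpper}.
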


\begin{corollary}\label{corollaryUpper}
      Let $G$ be a planar triangulation with $n \geq 4$ vertices. Then $G$ has a polychromatic $2$-coloring whose discrepancy is at most $\frac{5n-16}{9}$. Moreover, if $\delta(G) \geq 4$ then $G$ has a polychromatic $2$-coloring whose discrepancy is at most $\frac{n}{3}.$

\end{corollary}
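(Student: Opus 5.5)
The plan is to derive both parts of Corollary~\ref{corollaryUpper} from Theorem~\ref{thmDiscrepancy}, applied with $s=\alpha(G)$ as Observation~\ref{obs:independentnumber} allows. The Four Color Theorem gives a proper $4$-coloring $\chi$, and every color class of $\chi$ is independent, so its largest class has size at most $\alpha(G)$. Theorem~\ref{thmDiscrepancy} then yields a polychromatic $2$-coloring $\rho$ with
\[
\operatorname{disc}(\rho)\le \max\left(\frac{n}{3},\frac{4\alpha(G)-n}{3}\right).
\]
It therefore suffices to bound $\alpha(G)$ using Proposition~\ref{propMIS}.

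For the general statement, I use that every triangulation with $n\ge 4$ vertices has $\delta(G)\ge 3$, since each vertex lies on a cycle of surrounding faces. Proposition~\ref{propMIS} then gives $\alpha(G)\le \frac{2n-4}{3}$. Substituting,
\[
\frac{4\alpha(G)-n}{3}\le \frac{4\cdot\frac{2n-4}{3}-n}{3}=\frac{8n-16-3n}{9}=\frac{5n-16}{9}.
\]
It remains to compare this with $\frac{n}{3}=\frac{3n}{9}$. We have $\frac{5n-16}{9}\ge \frac{n}{3}$ exactly when $2n\ge 16$, that is, when $n\ge 8$, and in that range the maximum is $\frac{5n-16}{9}$.

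The main obstacle is the small cases $4\le n\le 7$, where $\frac{n}{3}$ exceeds $\frac{5n-16}{9}$, so the theorem alone does not give the stated bound. Here I use integrality. The discrepancy $|2r-n|$ has the same parity as $n$, and it is at most $\lfloor n/3\rfloor$. For $n=4,5,6,7$ this means the discrepancy is at most $0,1,2,1$ respectively, since $n=7$ forces an odd value that is at most $2$. The corresponding values of $\frac{5n-16}{9}$ are $\frac{4}{9},1,\frac{14}{9},\frac{19}{9}$.

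The cases $n=5$ and $n=7$ check out directly. The case $n=4$ also works: $G=K_4$ has a $2$-coloring with two vertices of each color, every face then sees both colors, and the discrepancy is $0$.

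The case $n=6$ is where the integrality argument falls short, because it only guarantees discrepancy $2>\frac{14}{9}$. The aim is to show that discrepancy $0$ is achievable, i.e.\ that the two colors can each take exactly $3$ of the $6$ vertices. I would settle this by a direct check. First, I would try to find a perfect matching in $G$; when one exists, Theorem~\ref{thm:disc_matching} gives discrepancy at most $6-4=2$, and the parity argument then brings this down. Failing that, I would inspect the few triangulations on $6$ vertices by hand. This $n=6$ case is the step I expect to need the most care.

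For the second statement, assume $\delta(G)\ge 4$. Proposition~\ref{propMIS} gives $\alpha(G)\le\frac{2n-4}{4}<\frac{n}{2}$. We can then take $s\le \frac{n}{2}$, so $\frac{4s-n}{3}\le\frac{n}{3}$. Equivalently, $n_1\le\frac{n}{2}$, and Lemma~\ref{lemma:n_2} applies directly to give discrepancy at most $\frac{n}{3}$.
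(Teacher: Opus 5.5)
Your route matches the paper's. You take a proper $4$-coloring, bound $n_1\le\alpha(G)\le\frac{2n-4}{\delta(G)}$ via Proposition~\ref{propMIS}, and apply Theorem~\ref{thmDiscrepancy}; the $\delta(G)\ge 4$ part is identical. You are also right that $\max\bigl(\frac{n}{3},\frac{5n-16}{9}\bigr)=\frac{5n-16}{9}$ only when $n\ge 8$. The paper's proof asserts this equality without comment. Your integrality argument correctly handles $n=4,5,7$.

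The gap is $n=6$, where you need discrepancy exactly $0$. Your primary route there fails. Theorem~\ref{thm:disc_matching} with a perfect matching gives only $6-\frac{4}{3}\cdot 3=2$, which is no better than the bound you already had. Since $n$ is even, $|2r-n|$ is even, so parity leaves both $0$ and $2$ possible and cannot bring $2$ down. As written, this case rests entirely on a hand check you have not carried out.

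That check is short. The average degree is $\frac{2(3n-6)}{n}=4$, so either $G$ is $4$-regular or it has a vertex $w$ of degree $3$.
\begin{itemize}
\item If $G$ is $4$-regular, its complement is a perfect matching, so $G=K_{2,2,2}$ is the octahedron. Its antipodal pairs are $\{a,a'\},\{b,b'\},\{c,c'\}$, and every face takes one vertex from each pair. Color $\{a,a',b\}$ red and $\{b',c,c'\}$ blue. Then every face has a red vertex ($a$ or $a'$) and a blue vertex ($c$ or $c'$).
\item If $w$ has degree $3$, then $G-w$ is the unique $5$-vertex triangulation, $K_5$ minus an edge. This is the bipyramid with apexes $p,q$ and equator $x,y,z$, and $w$ is inserted into a face, say $pxy$. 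The faces are $wpx,wxy,wyp,pyz,pzx,qxy,qyz,qzx$. Each meets both $\{p,q,w\}$ and $\{x,y,z\}$, so coloring these two sets differently works.
\end{itemize}
Both colorings have discrepancy $0\le\frac{14}{9}$, which closes the case you left open.
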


\begin{proof}
By the Four Color Theorem, $G$ has a proper $4$-coloring $\chi$. Let $n_1$ be the size of the largest color class of $\chi$. 
%Since the color classes of $\chi$ are independent sets, we have $n_1 \leq \alpha(G) \leq \frac{2n-4}{\delta(G)}$, 
By Observation \ref{obs:independentnumber} and Proposition \ref{propMIS}, we have $n_1 \leq \alpha(G) \leq \frac{2n-4}{\delta(G)}$ and we can apply Theorem \ref{thmDiscrepancy} with $s =\frac{2n-4}{\delta(G)}$. Since every planar triangulation with $n \geq 4$ vertices has minimum degree at least $3$, it follows that:
\[\operatorname{disc}(\rho) \leq \max\left(\frac{n}{3}, \frac{4\left(\frac{2n-4}{3}\right) - n}{3}\right) = \frac{5n - 16}{9}.\]

If $\delta(G) \geq 4$ then \[\operatorname{disc}(\rho) \leq \max\left(\frac{n}{3}, \frac{4\left(\frac{2n-4}{\delta(G)}\right) - n}{3}\right) = \frac{n}{3}.\qedhere\]

\end{proof}

There is a relation between the size of a matching and the size of the maximum independent set in $G$. 
\begin{observation}
   Given a matching $M$ of $G$, any subset of more than $n - |M|$ vertices includes an edge of $M$ and then it is not independent. Therefore, $\alpha(G) \leq n - |M| $ for any matching $M$. Based on this fact and Observation \ref{obs:independentnumber}, we can apply Theorem \ref{thmDiscrepancy} with $s = n - |M|$ and get Theorem \ref{thm:disc_matching}.
\end{observation}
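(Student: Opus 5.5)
The plan is to prove the two claims of the observation in turn: first the inequality $\alpha(G)\le n-|M|$, and then that substituting $s=n-|M|$ into Theorem~\ref{thmDiscrepancy} gives exactly the bound of Theorem~\ref{thm:disc_matching}.

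\emph{Step 1: $\alpha(G)\le n-|M|$.} I would use a pigeonhole argument on the edges of $M$. Let $I$ be an independent set in $G$. The $|M|$ edges of $M$ are pairwise vertex-disjoint, and $I$ contains at most one endpoint of each of them, since both endpoints together would span an edge. So for each edge of $M$, at least one endpoint lies outside $I$. These excluded endpoints are distinct because the edges are disjoint, hence $|V\setminus I|\ge |M|$, that is, $|I|\le n-|M|$. Equivalently, a vertex set of size more than $n-|M|$ must contain both endpoints of some matching edge.

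\emph{Step 2: apply Theorem~\ref{thmDiscrepancy}.} By the Four Color Theorem, $G$ has a proper $4$-coloring $\chi$. Each color class of $\chi$ is independent, so by Step 1 its largest class has size at most $s:=n-|M|$. Theorem~\ref{thmDiscrepancy} then gives a polychromatic $2$-coloring $\rho$ with
\[
\operatorname{disc}(\rho)\le \max\left(\frac{n}{3},\ \frac{4(n-|M|)-n}{3}\right)=\max\left(\frac{n}{3},\ n-\frac{4}{3}|M|\right).
\]

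\emph{Step 3: remove the maximum.} Any matching satisfies $|M|\le n/2$, so $n-\frac{4}{3}|M|\ge n-\frac{2n}{3}=\frac{n}{3}$. The maximum therefore equals $n-\frac43|M|$, which is precisely the bound of Theorem~\ref{thm:disc_matching}.

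There is no real obstacle here. The only point that needs care is Step 3: one must check that the second term always dominates, using $|M|\le n/2$, so that the maximum in Theorem~\ref{thmDiscrepancy} collapses to the matching bound rather than something weaker.
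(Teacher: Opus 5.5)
Your proposal is correct and follows the paper's argument: the pigeonhole bound $\alpha(G)\le n-|M|$ over the vertex-disjoint edges of $M$, then Theorem~\ref{thmDiscrepancy} with $s=n-|M|$ applied to a proper four-coloring. Your Step~3 uses $|M|\le n/2$ to show $\max\bigl(\frac{n}{3}, n-\frac{4}{3}|M|\bigr)=n-\frac{4}{3}|M|$; the paper leaves this implicit, and it is exactly what is needed to recover Theorem~\ref{thm:disc_matching}.
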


We have provided an alternative proof for two theorems from \cite{asayama2022balanced}.
%Theorem 7 and Theorem 4 
Moreover, we have generalized these results by establishing an upper bound on the discrepancy of a polychromatic $2$-coloring in terms of an upper bound on the size of the largest color class in a proper $4$-coloring.

\subsection{Improving the General Discrepancy Bounds}
\label{improve-sec}

In this section we prove the following theorem:

\begin{theorem}\label{thm:3/7}
 Let $G$ be a planar triangulation with $n \geq 6$ vertices. Then $G$ has a polychromatic $2$-coloring whose discrepancy is at most $\frac{3n-16}{7}$. 
\end{theorem}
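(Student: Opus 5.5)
We start from a proper $4$-coloring $\chi$ with $n_1\ge n_2\ge n_3\ge n_4$ and the $2$-coloring $\rho$ with $R=V_1\cup V_4$, $B=V_2\cup V_3$ from Lemma~\ref{lemma:n_2}. By Theorem~\ref{thmDiscrepancy}, we are already done unless $\frac{4n_1-n}{3}>\frac{3n-16}{7}$, i.e.\ unless $n_1 > \frac{4n}{7}-\frac{4}{7}$ (roughly). So we assume the largest class $V_1$ is very large, more than about $\frac{4n}{7}$. In that regime $R=V_1\cup V_4$ is the big color, and the plan is to \emph{recolor} a set $S\subseteq V_1$ of vertices from red to blue while keeping every face non-monochromatic. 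Each moved vertex reduces the discrepancy by $2$. So it suffices to find roughly $\frac{1}{2}\bigl(\operatorname{disc}(\rho)-\frac{3n-16}{7}\bigr)$ movable vertices, and to use Observation~\ref{obs:smallerclass} to bound $\operatorname{disc}(\rho)\le 2(n_1+n_4)-n$ in terms of $n_1$.

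When is a vertex $v\in V_1$ movable? Its neighbours lie in $V_2\cup V_3\cup V_4$, and its link cycle alternates between red ($V_4$) and blue ($V_2\cup V_3$) vertices. Turning $v$ blue creates a monochromatic blue face exactly when two consecutive link neighbours are both blue, i.e.\ both in $V_2\cup V_3$. So $v$ can be moved if every face at $v$ contains a $V_4$-neighbour, or more flexibly, after the local swap of colors we also allow recoloring some blue neighbours red. Since $V_1$ is independent, moves at different $V_1$ vertices interact only through shared neighbours. I would therefore first choose the $4$-coloring to maximize $n_1$, or use Kempe-chain swaps between classes $2,3,4$ to change which vertices lie in $V_4$. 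The \emph{color swapping} step exchanges classes along $\{2,4\}$- or $\{3,4\}$-Kempe components. This should give, for each $v\in V_1$, either a valid move or a structural certificate: an alternating separating cycle through $v$'s neighbourhood. Vertices without such a certificate are the "non-essential" ones, which can be moved. "Essential" vertices are those tied to alternating separating cycles.

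The counting step uses a discharging or charging argument. Since $V_1$ is large, by Euler's formula (Proposition~\ref{propEuler}) most of $V_1$ has degree $3$ or $4$: indeed $\sum_{v\in V_1} d(v)\le 2n-4$ faces-based counting gives $\sum_{v\in V_1}(d(v)-3)\le 2n-4-3n_1$. A degree-$3$ vertex of $V_1$ sits in a triangle of $V_2,V_3,V_4$ vertices, so it has exactly one red neighbour and two consecutive blue ones. It is therefore not directly movable, but it becomes movable if we swap its $V_2$ or $V_3$ neighbour into $V_4$. I would charge each essential vertex to vertices of $V_2\cup V_3\cup V_4$ or to separating cycles. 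The aim is to show that at least $\frac{7n_1-4n-c}{7}$-ish vertices of $V_1$ are simultaneously movable, with independence guaranteed by greedily picking movable vertices and discarding a bounded number of conflicting ones per pick. Plugging this into $\operatorname{disc}\le 2(n_1+n_4)-n-2|S|$, together with $n_4\le (n-n_1)/3$, should make the worst case balance exactly at $\frac{3n-16}{7}$. The constant $-16$ and the hypothesis $n\ge 6$ come from the exact Euler counts.

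The main obstacle is the middle step: proving that enough vertices of $V_1$ can be moved simultaneously. Concretely, we must control how degree-$3$ and degree-$4$ vertices of $V_1$ cluster around alternating separating cycles, where Kempe swaps fail to free them. I would first attempt to show that each alternating separating cycle "blocks" only a bounded number of $V_1$ vertices relative to the vertices it consumes. Then a weighted count gives a $\frac{3}{7}$ fraction rather than the $\frac{5}{9}$ from Corollary~\ref{corollaryUpper}. Finally, I would check that the whole procedure is effective; each swap and recoloring is local, giving a quadratic-time algorithm.
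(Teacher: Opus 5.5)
There is a genuine gap. The step that carries all the content is deciding which vertices leave $R$ and proving there are enough of them. That step is never carried out, and the one concrete mechanism you propose cannot work.

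A degree-$3$ vertex $v\in V_1$ has a triangle as its link. So in \emph{every} proper $4$-coloring where $v$ has color $1$, its three neighbours get exactly the colors $2,3,4$. You cannot ``swap its $V_2$ or $V_3$ neighbour into $V_4$'', because that neighbour is adjacent to the $V_4$ neighbour. Kempe swaps among $\{2,3,4\}$ only permute these three colors on the link. Hence, while $V_4$ stays red and $V_2\cup V_3$ stays blue, no degree-$3$ vertex of $V_1$ is ever movable. By your own degree count, such vertices form at least roughly half of $V_1$ in your regime.

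Concretely, insert a degree-$3$ vertex into each face of the octahedron, giving $n=14$.
\begin{itemize}
\item The $8$ inserted vertices are the unique maximum independent set (size $\frac{2n-4}{3}$, cf.\ Proposition~\ref{propMIS}), so maximizing $n_1$ forces $V_1$ to be exactly them.
\item The octahedron then carries its $3$-coloring with $n_2=n_3=n_4=2$, and every Kempe swap among $\{2,3,4\}$ is a global relabelling.
\item Nothing in $V_1$ is movable, so you are stuck at $R=V_1\cup V_4$ with discrepancy $6>\frac{26}{7}$.
\end{itemize}
Your fallback of also turning some blue vertices red is never made into a rule with a count, so it does not close the gap.

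The missing idea is to remove color-$4$ vertices from $R$ as well. Only color-$4$ vertices lying on a face with no color-$1$ vertex must be red; this is the paper's essential set $S_4$. Then only color-$1$ vertices lying on some face with no $S_4$-vertex must be red; this is the set $S_1$. The coloring $R=S_1\cup S_4$, padded up to $\frac n2$, is polychromatic. Every face has a blue vertex of color $2$ or $3$, and it has a red vertex by the definitions of $S_4$ and $S_1$. In the example above, $S_4=\emptyset$ and $R=V_1$ has discrepancy $2$.

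The count is a discharging argument:
\begin{itemize}
\item Each face without color $1$ gives charge $1$ to its color-$4$ vertex.
\item Each $y\in V_1$ starts with charge $d(y)$ and, if $d(y)\ge 4$, sends $1$ to each neighbour in $S_4$.
\item Vertices of $S_1$ keep at least $3$; a degree-$4$ one with two $S_4$-neighbours is not in $S_1$.
\item Vertices of $S_4$ end with at least $2$.
\end{itemize}
This gives $3s_1+2s_4\le 2n-4$. Together with $s_1+3s_4\le n$, it yields $s_1+s_4\le\frac{5n-8}{7}$.

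The bound on $S_4$ relies on Lemma~\ref{lem:typeI}. An essential vertex on exactly one color-$1$-free face must have a color-$1$ neighbour of degree at least $4$. Otherwise its color-$2$/$3$ neighbours form an alternating separating cycle, and swapping colors $1$ and $4$ inside it lowers $n_1-n_4$. This is why the paper fixes a $4$-coloring \emph{minimizing} $n_1-n_4$, which is the opposite of your choice to maximize $n_1$.

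A smaller point concerns your opening reduction. Theorem~\ref{thmDiscrepancy} gives $\max(\frac n3,\frac{4n_1-n}{3})$, and $\frac n3\le\frac{3n-16}{7}$ only when $n\ge 24$. So your reduction, like the paper's appeal to Lemma~\ref{lemma:n_2}, does not cover small $n$. Indeed, for $n=7$ the claimed bound $\frac57$ is below $1$, which no $2$-coloring of an odd number of vertices can achieve.
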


Recall the two color classes from the previous section (that are obtained as a partition of the colors of a proper $4$-coloring). In view of Lemma~\ref{lemma:n_2} we may assume that $n_1+n_4\ge n_2+n_3$ as otherwise the discrepancy is at most $n/3$. 
The main proof idea is to start by coloring all vertices in $V_1$ red, all vertices in $V_2 \cup V_3$ blue, and possibly splitting the vertices in $V_4$ between red and blue to get a smaller discrepancy. 

Let $F_1 \subset F(G)$ be the set of faces that contain a vertex from $V_1$ and let $F_{234} = F(G) \setminus F_1 $ be the set of faces that do not.
%contain a vertex from $V_1$. 
If $V_1 \subseteq R $ and $ V_2 \cup V_3 \subseteq B$ then all faces in $F_1$ are guaranteed to be polychromatic, regardless of the color of the vertices in $V_4$. In addition, every face $f \in F_{234}$ contains %two vertices in $B$ and 
a vertex from $V_4$ that we can assign to $R$ to make $\rho$ polychromatic. Let $S_4 \subseteq V_4$ be the set of vertices in $V_4$ belonging to at least one face in $F_{234}$. Let $s_4 = |S_4|$. We call the vertices in $S_4$ \emph{essential}, since they are required to be in the color class $R$ to make the coloring polychromatic. 
The rest of the vertices in $V_4$ are called \emph{non-essential}, and we denote the set as $\bar{S}_4 = V_4 \setminus S_4$.  
Let $\rho$ be the $2$-coloring that assigns $R= V_1 \cup S_4$ and let $B =V_2 \cup V_3 \cup \bar{S}_4$.  The sizes of the color classes of $\rho$ are $r = n_1 + s_4$ and $b = n_2 + n_3 + n_4 - s_4$. Observe that $\rho$ is a polychromatic $2$-coloring and $disc(\rho) = 2r-n$, by Observation \ref{obs:disc_twocases} and Lemma~\ref{lemma:n_2}. 
%Because $r \geq \frac{n_1} \geq \frac{n}{2}$.
As a warm-up, in the following proposition we bound the discrepancy of $\rho$ by analyzing the properties of $S_4$; this result is independent of the rest of the section.

%\alma{name $F_{234}$? define good faces?}
  % Since every face of $G$ is triangular, it contains three different colors from the proper $4$-coloring $\chi$. Notice that every face $f \in F(G)$ has at least one vertex in $R = V_1 \cup S_4$, because $f$ is either in $F_1$ or it has a vertex in $S_4$. On the other hand, every face contains at least one vertex in $B = V_2 \cup V_3$.
  
% \begin{theorem}
% Every planar triangulation $G$ with n vertices has a polychromatic $2$-coloring $\rho$ such that $\operatorname{disc}(\rho) \leq \frac{n}{2}$.
% \end{theorem}

  \begin{proposition}
  $disc(\rho) \leq \frac{n-4}{2}$.
  \end{proposition}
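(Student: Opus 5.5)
The bound $\operatorname{disc}(\rho)\le \frac{n-4}{2}$ is equivalent to $r = n_1 + s_4 \le \frac{3n-4}{4}$, because $\operatorname{disc}(\rho) = 2r-n$ here. The plan is to prove two linear inequalities in $n_1$ and $s_4$ and add them so that they produce exactly $4(n_1+s_4) \le 3n-4$.

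\emph{First inequality (face counting).} Every face of $G$ is a triangle, and its three vertices lie in three distinct classes of $\chi$. Hence every face contains at most one vertex of $V_1$ and at most one vertex of $V_4$. This has two consequences:
\begin{itemize}
\item The faces of $F_1$ are counted exactly once by $\sum_{v\in V_1} d(v)$, and $d(v)\ge 3$ for $n\ge 4$. So $|F_1| \ge 3n_1$.
\item By definition, each essential vertex of $S_4$ lies on some face of $F_{234}$. Since such a face contains only one $V_4$-vertex, distinct essential vertices are witnessed by distinct faces. So $|F_{234}| \ge s_4$.
\end{itemize}
The sets $F_1$ and $F_{234}$ partition $F(G)$, and $|F(G)| = 2n-4$ by Proposition~\ref{propEuler}. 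Therefore $3n_1 + s_4 \le 2n-4$.

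\emph{Second inequality (class sizes).} By the ordering $n_2 \ge n_3 \ge n_4 \ge s_4$ we get $n = n_1+n_2+n_3+n_4 \ge n_1 + 3s_4$. If we want to be careful, we drop $n_4$ only down to $s_4$, which still gives $n_1 + 3s_4 \le n$.

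Adding the two inequalities gives $4n_1 + 4s_4 \le 3n-4$, that is, $r \le \frac{3n-4}{4}$. Hence $\operatorname{disc}(\rho) = 2r - n \le \frac{n-4}{2}$. Recall that $\rho$ is polychromatic and that $2r-n$ is indeed its discrepancy under the standing assumption $n_1+n_4\ge n_2+n_3$.

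The only step needing real care is the injectivity claim behind $|F_{234}|\ge s_4$. It relies on the properness of $\chi$: each triangular face uses three distinct colors, so no face of $F_{234}$ can contain two vertices of $V_4$. Everything else is routine arithmetic.
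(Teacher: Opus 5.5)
Your two inequalities and their sum are exactly the paper's argument: $3n_1+s_4\le|F_1|+|F_{234}|=2n-4$ by face counting, and $n_1+3s_4\le n$ by the class ordering. Together they correctly give $r=n_1+s_4\le\frac{3n-4}{4}$.

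The gap is in the last step, where you turn this into a discrepancy bound. An upper bound on $r$ controls $\operatorname{disc}(\rho)=|2r-n|$ only when $r\ge\frac n2$. You justify $r\ge\frac n2$ by the standing assumption $n_1+n_4\ge n_2+n_3$. But that assumption concerns $n_1+n_4$, not $r=n_1+s_4$, and $s_4$ can be much smaller than $n_4$, even $0$. So it does not imply $r\ge\frac n2$.

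The step genuinely fails as justified. Take the hexagonal bipyramid ($n=8$), color both apices $1$, and color the hexagon $2,3,4,2,3,4$. All four classes have size $2$, so the ordering and $n_1+n_4\ge n_2+n_3$ both hold. But every face contains an apex, hence $F_{234}=\emptyset$ and $s_4=0$. Then $\rho$ has $r=2$, $b=6$, and $\operatorname{disc}(\rho)=4>2=\frac{n-4}{2}$.

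What you need is the stronger reduction the paper invokes when it cites Lemma~\ref{lemma:n_2} for $\operatorname{disc}(\rho)=2r-n$. If $n_1\le\frac n2$, Lemma~\ref{lemma:n_2} already yields a polychromatic coloring of discrepancy at most $\frac n3$, so one may assume $n_1>\frac n2$. Then $r\ge n_1>\frac n2$, so $\operatorname{disc}(\rho)=2r-n$, and your bound $r\le\frac{3n-4}{4}$ finishes the proof. With that assumption made explicit, your argument coincides with the paper's.
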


  \begin{proof}
  
We give an upper bound for $r = n_1 + s_4$ based on the following observations:
\begin{enumerate}
    
\item 
Every vertex in $V_1$ is in at least $3$ faces from $F_1$ because $\delta(G)\geq 3$. Every face in $F_1$ has exactly one vertex from $V_1$. Then $ 3n_1 \leq |F_1| $.

\item 
Every vertex in $S_4$ is in at least one face from $F_{234}$ by definition. Every face in $F_{234}$ has exactly one vertex in $S_4$. Then $s_4 \leq |F_{234}| $.
 
\end{enumerate}
 By Proposition \ref{propEuler}, the total number of faces of $G$ is $|F_1|+|F_{234}| = 2n - 4$, then we conclude that:

\begin{equation}\label{eqn:faces}
3n_1 +s_4\leq |F_1| + |F_{234}| =  2n-4 
\end{equation}

We know that $S_4 \subseteq V_4$,  then $s_4 \leq n_4$. By Observation \ref{obs:smallerclass}, we have $ n_4  \leq \frac{n - n_1}{3}$, which implies that:
\begin{equation}\label{eqn:classes}
n_1 + 3s_4 \leq n_1 + 3\left(\frac{n - n_1}{3}\right) = n
\end{equation}

By adding up the equations \ref{eqn:faces} and \ref{eqn:classes}, we get $4n_1 + 4s_4  \leq 3n - 4 $. Then $r = n_1 + s_4 \leq \frac{3n-4}{4}$
%By Observation \ref{obs:disc_twocases}, 
and $\operatorname{disc}(\rho) = 2r - n \leq \frac{n-4}{2}$.
\end{proof}

Some extra observations help to improve this bound. 
Let $G$ be a planar triangulation with $n$ vertices.  Let $\chi_{min}$ be a proper $4$-coloring of $V(G)$ such that the difference between the sizes of the largest and the smallest color class, i.e. $n_1-n_4$, is minimum over all proper $4$-colorings of $V(G)$. From the results in Subsection \ref{sub:independent} (Lemma~\ref{lemma:n_2}), we know that if $n_1 + n_4 \leq \frac{2n}{3}$ or $n_1 \leq \frac{n}{2}$, then the coloring that assigns $R= V_1 \cup V_4$ and $B = V_2 \cup V_3$ is polychromatic and has a discrepancy at most $\frac{n}{3}$. 
 % Also, if $n_1 \leq \frac{n}{2}$ then Theorem \ref{thmDiscrepancy} guarantees that a polychromatic $2$-coloring $\rho$ exists such that $\operatorname{disc}(\rho) \leq \frac{n}{3}$. \alma{(add the properties we can assume by previous result)} From the previous subsection, we know that if $n_1 \leq \frac{n}{2}$ then Theorem \ref{thmDiscrepancy} guarantees that a polychromatic $2$-coloring $\rho$ exists such that $\operatorname{disc}(\rho) \leq \frac{n}{3}$.
Thus, we may assume that $n_1 > \frac{n}{2}$ and $n_1 +n_4 > \frac{2n}{3}$ in the following. Then, Observation \ref{obs:smallerclass} implies that $n_4\le\frac{n}{6}$. Therefore, we get 

 \begin{observation}\label{n1n4}
     $n_1 - n_4 \geq \frac{n}{3}$.
     %$diff(\chi_{min}) =
 \end{observation}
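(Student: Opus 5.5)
This is a short derivation from the standing assumptions just established. By Lemma~\ref{lemma:n_2}, applied to $\chi_{min}$, if $n_1 \le \frac{n}{2}$ or $n_1+n_4 \le \frac{2n}{3}$ we already have a polychromatic $2$-coloring of discrepancy at most $\frac{n}{3}$. So we are in the complementary case, where both $n_1 > \frac{n}{2}$ and $n_1+n_4 > \frac{2n}{3}$ hold. I would combine these two inequalities with Observation~\ref{obs:smallerclass} to get an upper bound on $n_4$, and then subtract.

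The steps are as follows.
\begin{enumerate}
\item By Observation~\ref{obs:smallerclass}, $n_4 \le \frac{n-n_1}{3}$.
\item Since $n_1 > \frac{n}{2}$, this gives $n_4 < \frac{n - n/2}{3} = \frac{n}{6}$, which is the bound $n_4 \le \frac{n}{6}$ quoted in the text.
\item Hence $n_1 - n_4 > \frac{n}{2} - \frac{n}{6} = \frac{n}{3}$.
\end{enumerate}

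Alternatively, one can write $n_1 - n_4 = (n_1+n_4) - 2n_4$ and use $n_1+n_4 > \frac{2n}{3}$ together with $n_4 \le \frac{n}{6}$. This gives the same inequality, $n_1-n_4 > \frac{2n}{3}-\frac{n}{3} = \frac{n}{3}$.

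There is no real obstacle here. The only point needing care is that the inequality is stated for the particular coloring $\chi_{min}$, and the reduction must be applied to that coloring. This is legitimate because Lemma~\ref{lemma:n_2} holds for every proper $4$-coloring. So if $\chi_{min}$ violated the standing assumptions, we would already be done with discrepancy at most $\frac{n}{3} \le \frac{3n-16}{7}$ (valid for $n\ge 8$; small $n$ is handled separately). Thus the minimum of $n_1-n_4$ over all proper $4$-colorings is itself at least $\frac{n}{3}$. This is presumably how the observation is used later: no recoloring can bring the spread below $\frac{n}{3}$.
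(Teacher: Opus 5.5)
Your derivation is correct and is the same as the paper's: the standing assumption $n_1>\frac{n}{2}$ together with Observation~\ref{obs:smallerclass} gives $n_4<\frac{n}{6}$, hence $n_1-n_4>\frac{n}{3}$, and your alternative route via $n_1+n_4>\frac{2n}{3}$ is equally valid. One slip in your side remark: $\frac{n}{3}\le\frac{3n-16}{7}$ holds exactly when $n\ge 24$, not $n\ge 8$, although this affects the overall reduction for Theorem~\ref{thm:3/7} rather than the observation itself.
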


Let $V_1^{d=3}$ be the set of vertices in $V_1$ whose degree equals $3$ and let $V_1^{d\geq4} = V_1 \setminus V_1^{d=3}$ be those with degree at least $4$.

% Consider the partition of $V_1$ into vertices of degree $3$ and vertices of degree at least $4$. This induces a partition of the set $F_1$ because every face of $F_1$ contains exactly one vertex from $V_1$. 
% Let $\bar{F_1}$ be the set of faces that contain a vertex from $V_1$ whose degree is exactly $3$. Then $F_1 \setminus \bar{F_1}$ corresponds to the set of faces that contain a vertex in $V_1$ whose degree is at least $4$.
% %explicar?
%Let $F_{234} = F(G) \setminus F_1$, the set of faces without a vertex from $V_1$. 
We refer to the faces in $F_{234}$ as \emph{good faces}.
% We call \emph{good faces} the faces in $F(G)\setminus\bar{F_1}$. The set of good faces corresponds to the union of two disjoint sets of faces. We rename these sets to relate them to their definitions. Let $ F_1^{d\geq4} =F_1 \setminus \bar{F_1}$, the faces with a vertex in $V_1$ whose degree is at least $4$ and $F_{234} = F(G) \setminus F_1$, the faces without a vertex from $V_1$.
%Every essential vertex is in at least one good face by definition. 
We classify every essential vertex as \emph{type I} if it is in exactly one good face and \emph{type II} if it is in at least two good faces. 
%necesitaremos type III???
%\alma{maybe add them before?}
For any vertex $v$ in $G$, let $N_i(v)$ be the set of neighbors of $v$ in $V_i$ for all $i \in \{1, 2, 3, 4\}$.

\begin{lemma}\label{lem:color1neighbors}
    Every essential vertex $x$ of type I has an odd degree $d(x) \geq 3$ and $|N_1(x)| = \frac{d(x)-1}{2}$. 
\end{lemma}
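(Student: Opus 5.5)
The plan is to look at the cyclic structure of the neighbors of $x$. Since $G$ is a triangulation with $n\ge 4$, the neighbors of $x$ form a cycle $u_0,u_1,\dots,u_{d-1}$ (the link of $x$), with $d=d(x)\ge 3$ because $\delta(G)\ge 3$. The faces incident to $x$ are exactly the triangles $x u_i u_{i+1}$, with indices taken modulo $d$. Since $x\in V_4$ and $\chi$ is proper, each $u_i$ lies in $V_1\cup V_2\cup V_3$. Moreover, $u_iu_{i+1}$ is an edge, so consecutive neighbors get different colors. In particular, no two consecutive $u_i$ both lie in $V_1$.

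Next I translate ``good face'' into a condition on the link. The face $xu_iu_{i+1}$ lies in $F_{234}$ exactly when neither $u_i$ nor $u_{i+1}$ is in $V_1$, since $x\notin V_1$. Call the link edge $u_iu_{i+1}$ \emph{bad} in this case. Then $x$ being essential of type I means that exactly one link edge is bad; relabel so that it is $u_{d-1}u_0$. Deleting this edge from the link cycle leaves the path $P=u_0u_1\cdots u_{d-1}$ on all $d$ neighbors. Every edge of $P$ is not bad, so it has at least one endpoint in $V_1$. It also has at most one endpoint in $V_1$, because $V_1$ is independent. Hence along $P$ the vertices alternate between $V_1$ and $(V_2\cup V_3)$.

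Finally, both endpoints $u_0$ and $u_{d-1}$ lie outside $V_1$, since the edge $u_{d-1}u_0$ is bad. An alternating path whose two ends lie in the same class has an odd number of vertices, so $d$ is odd. The path has $(d+1)/2$ vertices outside $V_1$ and $(d-1)/2$ vertices in $V_1$, which gives $|N_1(x)|=\frac{d(x)-1}{2}$. Together with $d\ge 3$, this is the statement.

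The only point requiring care is the first step: that the neighbors of $x$ really form a simple cycle and that the faces at $x$ correspond bijectively to the link edges. This is standard for triangulations with $n\ge 4$, where the link of every vertex is a cycle. I do not expect any real obstacle; the argument is a parity count once the alternation along $P$ is established.
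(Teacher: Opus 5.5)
Your proof is correct and follows the paper's argument. In both, the unique good face cuts the cyclic sequence of neighbors of $x$ into a path that must alternate between $V_1$ and $V_2\cup V_3$ with both ends outside $V_1$, which gives odd degree and $|N_1(x)|=\frac{d(x)-1}{2}$. Your version makes the alternation step more explicit, deriving it from the independence of $V_1$ together with the absence of any further good face.
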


\begin{proof}
    Let $x \in S_4$ be an essential vertex of type I with degree $d(x)$. Since $x$ is in exactly one good face, there is exactly one vertex in $N_2(x)$ and one vertex in $N_3(x)$ that are adjacent to each other. The other $d(x) - 2$ vertices are alternating between elements in $N_1(x)$ and elements in $N_2(x) \cup N_3(x)$ starting and ending with color 1 adjacent to the good face, otherwise we contradict that vertex $x$ is of type I. This implies that $\frac{d(x) -1}{2}$ of the neighbors have color 1. 
\end{proof}

% \alma{remove this and refer to lemma 6 when needed}
% \begin{proposition}\label{prop:nocolor1neighbors}
% Every essential vertex of type I is adjacent to at least one vertex in $V_1$. 
% \end{proposition}
% \begin{proof}

% Let $x \in S_4$. We know that $d(x) \geq 3$, then it is in at least $3$ faces. If $N_1(x) = \emptyset $, then $x$ is in at least $3$ good faces. 
%    % Let $x \in S_4$. If $N_1(x) = \emptyset $, then $x$ is in at least $3$ good faces.
%    \end{proof}

We say that a connected subgraph of $G$ is  \emph{$i$-$j$-alternating} if it is properly $2$-colored with colors $i$ and $j$.

\begin{lemma}\label{lem:alternating cycle}
Let $n>4$. If $x \in S_4$ is an essential vertex and every vertex in $N_1(x)$ has degree $3$, then the vertices in $N_2(x) \cup N_3(x)$ form a $2$-$3$-alternating separating cycle $C$ in $G$. Moreover, if $x$ is type I, then $x$ has odd degree $d(x) \geq 7$.
   
\end{lemma}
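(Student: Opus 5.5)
We work in the link of $x$. Since $G$ is a triangulation with $n>4$, the neighbours of $x$ in cyclic order $u_1,\dots,u_d$ ($d=d(x)$) form a cycle $L$, the link of $x$. Because $\chi$ is proper and $x\in V_4$, every $u_i$ lies in $V_1\cup V_2\cup V_3$.

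\textbf{Step 1: $N_1(x)$ is independent and every $V_1$-neighbour is isolated in the link.} Take $v\in N_1(x)$, so $d(v)=3$. The two link-neighbours of $v$ are $u_{i-1},u_{i+1}$ (writing $v=u_i$). Since $d(v)=3$, the neighbourhood of $v$ is exactly $\{x,u_{i-1},u_{i+1}\}$, and the faces around $v$ are $xu_{i-1}v$, $xvu_{i+1}$ and $vu_{i-1}u_{i+1}$. So $u_{i-1}u_{i+1}$ is an edge of $G$.

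Now delete each such $v$ from the link and replace the path $u_{i-1}vu_{i+1}$ by the chord $u_{i-1}u_{i+1}$. Consecutive vertices of $N_1(x)$ cannot be adjacent, since $V_1$ is independent. Hence the shortcut closed walk $C$ visits exactly the vertices of $N_2(x)\cup N_3(x)$ in cyclic order, and consecutive vertices of $C$ are adjacent in $G$. Properness of $\chi$ forces consecutive vertices of $C$ to have distinct colours in $\{2,3\}$, so $C$ is $2$-$3$-alternating and in particular has even length.

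$N_2(x)\cup N_3(x)$ is nonempty. If $d=3$ with a $V_1$-neighbour of degree $3$, then $G=K_4$, contradicting $n>4$. Also, $x$ is essential, so it lies in a good face, whose other two vertices are a $2$-vertex and a $3$-vertex. Hence $|C|\ge 2$. We need $|C|\ge 3$ for $C$ to be a genuine cycle, and since $C$ alternates this means $|C|\ge4$. If $|C|=2$, say $C=\{a,b\}$, then every other link vertex would be a degree-$3$ $V_1$-vertex. The link would be $a,b,v$ with $d=3$, or $a,v,b,v'$ with $d=4$, and in either case $G$ closes up with $n\le 5$. I would check these tiny cases by hand, and they may force the hypothesis $n>4$ to handle them precisely; this small-case bookkeeping is the main fiddly point.

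\textbf{Step 2: $C$ is a simple cycle and it separates.} The vertices of $C$ are distinct link vertices, so $C$ is a simple cycle of length $\ge4$. Its interior (the side containing $x$) contains $x$ and all of $N_1(x)$, while $C$ itself contains none of them. Its exterior must contain some vertex. Otherwise $G$ would consist of $x$, the link, and the exterior region of $C$ triangulated by chords only. Such chords are all $2$-$3$ edges, and an even cycle of length $\ge4$ coloured alternately with $2,3$ cannot be triangulated by chords without creating a triangle using only two colours, which is impossible for a proper colouring. So both sides of $C$ are nonempty and $C$ is separating.

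\textbf{Step 3: type I implies $d(x)$ odd and $d(x)\ge7$.} By Lemma~\ref{lem:color1neighbors}, $d$ is odd and $|N_1(x)|=(d-1)/2$. Hence $|C|=d-(d-1)/2=(d+1)/2$. Since $|C|$ is even and at least $4$ by Steps 1–2, we get $(d+1)/2\ge4$, so $d\ge7$.
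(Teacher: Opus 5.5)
Your proof follows the paper's argument. You shortcut each degree-$3$ colour-$1$ neighbour $v$ of $x$ by the $2$-$3$ chord of the separating triangle around $v$, observe that the resulting cycle alternates, and then count. For type I, the paper counts the edges of $C$ directly: one comes from the unique good face and $\ell(C)-1$ from colour-$1$ neighbours, so $d(x)=2\ell(C)-1$. You get the same identity $|C|=(d+1)/2$ from Lemma~\ref{lem:color1neighbors}. Your Step 2, that a face outside $C$ with only colours $2,3$ cannot exist, justifies the separation claim, which the paper only asserts.

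The one loose end is your exclusion of $|C|=2$. The paper does not prove this step either; it just says ``$\ell(C)$ is even and at least $4$''. As you suspected, $n>4$ alone does not settle it. The link $a,v,b,v'$ closes up to $K_5$ minus the edge $vv'$ (the triangular bipyramid, $n=5$). That graph is properly $4$-coloured with $x\in V_4$, $a\in V_2$, $b\in V_3$ and $v,v'\in V_1$ of degree $3$, so $n>4$ gives no contradiction there.

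What rules it out is essentiality, which you already noted. The good face at $x$ puts a colour-$2$ and a colour-$3$ vertex \emph{consecutively} in the link of $x$. In $a,v,b,v'$, by contrast, every face at $x$ contains $v$ or $v'$, so $x\notin S_4$. Hence if $|C|=2$, the link must be $a,b,v$, which forces $G=K_4$; only this case uses $n>4$. With that sentence added, your proof is complete.
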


\begin{proof}
Let $y \in N_1(x)$. Since $d(y)= 3$, $y$ is inside a separating triangle formed by $x$ and the endpoints of a $2$-$3$-edge.
Then there is a $2$-$3$-edge for each vertex in $N_1(x)$ and a $2$-$3$ edge for each good face incident to $x$, all those edges together form a $2$-$3$-alternating separating cycle $C$ whose inside vertices are $x \in V_4$ and all the vertices in $N_1(x)$. 
Let $\ell(C)$ be the length of $C$. Observe that $\ell(C)$ is even and at least 4, so in the case where $x$ is of type I, exactly one of the edges in $C$ is part of a good face containing $x$ and $\ell(C) - 1$ edges are part of a separating triangle that contains a vertex of color 1. Then $|N_1(x)| = \ell(C) - 1$. Note that $d(x) = |N_1(x)| + \ell(C) = 2\ell(C) - 1 \geq 2(4) - 1 = 7$.\end{proof}

\begin{observation}\label{obs:neighborscolor1}
For every vertex $x$ satisfying Lemma \ref{lem:alternating cycle}, $|N_1(x)| < |N_2(x) \cup N_3(x)|$ as shown in the proof of the lemma. Since we assume that $n_2 + n_3 < \frac{n}{3}$, we have that $|N_1(x)| < \frac{n}{3}$.
\end{observation}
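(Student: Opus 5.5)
The plan is to read the inequality off the structure established in the proof of Lemma~\ref{lem:alternating cycle}. I will set up an injection from $N_1(x)$ into the edges of the cycle $C$ that misses at least one edge. Then $|N_1(x)|\le \ell(C)-1<\ell(C)=|V(C)|=|N_2(x)\cup N_3(x)|$.

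Concretely, I list the neighbors of $x$ in their cyclic rotation order around $x$. Let $y\in N_1(x)$. Since $d(y)=3$, the neighbors of $y$ are $x$ and the two rotation-neighbors $u,w$ of $y$ around $x$. The faces around $y$ then force $uw\in E(G)$. Neither $u$ nor $w$ lies in $V_1$ (both are adjacent to $y$), and neither lies in $V_4$ (both are adjacent to $x$). Hence $uw$ is a $2$-$3$ edge of $C$.

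Distinct color-$1$ neighbors of $x$ give distinct edges of $C$. The reason is that the edge assigned to $y$ is the edge of the closed walk $C$ which ``skips over'' $y$ in the rotation, and each skipped position is occupied by only one vertex. Two color-$1$ vertices cannot be consecutive in the rotation, since $V_1$ is independent.

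The remaining edges of $C$ join two consecutive rotation-neighbors of $x$ and therefore bound a face $xuw$ with $u,w\in V_2\cup V_3$. Such a face contains no vertex of $V_1$, so it is a good face. Because $x\in S_4$ is essential, it lies in at least one good face. So at least one edge of $C$ is not in the image of the injection, which gives $|N_1(x)|\le \ell(C)-1$. Moreover every vertex of $N_2(x)\cup N_3(x)$ appears on $C$ exactly once, since $C$ is a cycle as stated in Lemma~\ref{lem:alternating cycle}, so $\ell(C)=|N_2(x)\cup N_3(x)|$. This proves the strict inequality.

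For the second claim, recall the standing assumption of this subsection: $n_1+n_4>\frac{2n}{3}$, since otherwise Lemma~\ref{lemma:n_2} already yields discrepancy at most $\frac n3$. This assumption is equivalent to $n_2+n_3<\frac n3$. Since $N_2(x)\cup N_3(x)\subseteq V_2\cup V_3$, we get $|N_1(x)|<|N_2(x)\cup N_3(x)|\le n_2+n_3<\frac n3$.

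The only step needing care is the injectivity and the assertion that $C$ is a genuine simple cycle, so that $\ell(C)$ counts vertices and not just edges. I will take both directly from the proof of Lemma~\ref{lem:alternating cycle}, with the rotation argument above as the explicit justification.
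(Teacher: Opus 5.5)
Your proposal is correct and follows the paper's argument. The paper likewise reads off from the proof of Lemma~\ref{lem:alternating cycle} that $C$ consists of one $2$-$3$ edge per vertex of $N_1(x)$ plus one per good face at $x$, so $|N_2(x)\cup N_3(x)|=\ell(C)\ge |N_1(x)|+1$, and then it uses the standing assumption $n_2+n_3<\frac{n}{3}$. Your rotation-order bookkeeping just makes explicit the injectivity and the fact that $C$ is a simple cycle. As a side remark, the same gap-counting gives the strict inequality for every essential $x$ without the degree-$3$ hypothesis: color-$1$ neighbors are never consecutive in the rotation, and a good face supplies a gap containing none.
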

\begin{observation}\label{obs:swapingcolors}
    Let $C$ be a $2$-$3$-alternating separating cycle. If we swap the colors $1$ and $4$ of the vertices inside $C$, we obtain another valid $4$-coloring of $V(G)$. 
\end{observation}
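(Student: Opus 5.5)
The plan is to argue directly from the structure established in Lemma~\ref{lem:alternating cycle}. Let $C$ be a $2$-$3$-alternating separating cycle, and let $\chi'$ be the coloring obtained from $\chi$ by exchanging colors $1$ and $4$ on every vertex strictly inside $C$. We must show two things: $\chi'$ uses only the four colors, which is clear, and $\chi'$ is proper. Since exchanging two labels is a bijection on colors, it is enough to check each edge $uv$ of $G$ according to where its endpoints lie relative to $C$.

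We split into three cases using planarity.

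\emph{Both endpoints inside $C$.} Both colors are transformed by the same permutation $(1\,4)$, so $\chi'(u)\neq\chi'(v)$ follows from $\chi(u)\neq\chi(v)$.

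\emph{Both endpoints on or outside $C$.} Neither color changes, so properness is inherited from $\chi$.

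\emph{One endpoint $u$ strictly inside, the other $v$ not strictly inside.} Because $C$ separates the interior from the exterior in the plane embedding, every edge leaving the interior must end on $C$, so $v\in V(C)$. The cycle $C$ is $2$-$3$-alternating, so $\chi(v)\in\{2,3\}$ and $\chi'(v)=\chi(v)$. Moreover $\chi'(u)\in\{1,4\}$ exactly when $\chi(u)\in\{1,4\}$, and $\chi'(u)=\chi(u)$ otherwise. We now check the two subcases:
\begin{itemize}
\item If $\chi(u)\in\{1,4\}$, then $\chi'(u)\in\{1,4\}$ while $\chi'(v)\in\{2,3\}$, so the colors differ.
\item If $\chi(u)\in\{2,3\}$, nothing changes at either endpoint, and properness again comes from $\chi$.
\end{itemize}

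Every edge is covered by one of these cases, so $\chi'$ is a proper $4$-coloring. The only point that needs care is the claim that no edge joins the interior of $C$ directly to its exterior. This is the Jordan curve property of the embedded cycle $C$, and it holds because the embedding is planar. Nothing else is delicate: the argument is the standard Kempe-type swap, applied here with the region bounded by $C$ instead of a Kempe chain.
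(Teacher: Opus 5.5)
Your proof is correct and is the standard Kempe-type argument the paper relies on: the paper states this observation without proof, and your edge-by-edge check, using the Jordan curve property to force every edge leaving the interior to end on $C$, whose vertices have colors $2$ or $3$, is the intended justification. Your opening sentence mentions Lemma~\ref{lem:alternating cycle}, but you never use it; the argument works for any $2$-$3$-alternating cycle in the plane embedding.
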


% \
% \alma{Kempe chains are maximal alternating subgraphs, which is not the case here.}
% \alma{add a short 1 line proof of the observation. refer to Kempe chains. Kempe has some papers on coloring planar graphs where he does these kinds of swaps}

%Inside $C$ there is only one vertex from $V_4$, namely $x$. $C$ is formed by an even number of $2$-$3$ edges, each of which forms a triangle with $x$. Exactly one of those triangles is a good face, the rest have to be separating triangles with a degree $3$ vertex from $V_1$ inside. Since the length of the cycle is at least $4$, there are at least $3$ degree $3$ vertices from $V_1$ inside the cycle $C$.

\begin{lemma}\label{lem:typeI}
Every essential vertex $x$ of type I is adjacent to at least one vertex in $V_1^{d\geq4}$.
\end{lemma}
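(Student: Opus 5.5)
The plan is a contradiction argument against the minimality of $\chi_{min}$: if the statement fails, we recolor inside a $2$-$3$-alternating separating cycle and obtain a proper $4$-coloring with a strictly smaller gap between its largest and smallest color classes. Throughout, we work under the standing assumptions of this subsection: $\chi=\chi_{min}$ minimizes $n_1-n_4$, and $n_1>\frac{n}{2}$ and $n_1+n_4>\frac{2n}{3}$. Consequently $n_1-n_4\ge \frac{n}{3}$ by Observation~\ref{n1n4}, and $n_2+n_3<\frac{n}{3}$.

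Suppose $x\in S_4$ is of type I and every vertex of $N_1(x)$ has degree $3$. By Lemma~\ref{lem:color1neighbors}, $|N_1(x)|=\frac{d(x)-1}{2}\ge 1$, so $N_1(x)$ is nonempty and all of its vertices have degree $3$. The case $n\le 4$ is trivial, since then $G=K_4$ has no good faces of the required kind. Hence Lemma~\ref{lem:alternating cycle} applies. It gives a $2$-$3$-alternating separating cycle $C$ whose interior consists exactly of $x$ and $N_1(x)$. Its proof also gives $|N_1(x)|=\ell(C)-1\ge 3$.

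Swapping colors $1$ and $4$ inside $C$ gives a valid proper $4$-coloring $\chi'$ by Observation~\ref{obs:swapingcolors}. Under the swap, the vertices of $N_1(x)$ become color $4$ and $x$ becomes color $1$. Setting $k=|N_1(x)|-1\ge 2$, the class sizes of $\chi'$ are
\[
n_1-k,\qquad n_2,\qquad n_3,\qquad n_4+k.
\]

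It remains to show that the gap shrinks strictly. By Observation~\ref{obs:neighborscolor1}, $|N_1(x)|<\frac{n}{3}$, so $k<\frac{n}{3}\le n_1-n_4$. This gives
\[
n_4<n_4+k<n_1 \qquad\text{and}\qquad n_4<n_1-k<n_1 .
\]
Moreover, $n_3\le n_2\le n-n_1<\frac{n}{2}<n_1$, and $n_2\ge n_3\ge n_4$. So every class of $\chi'$ has size at least $n_4$, and every class has size strictly less than $n_1$. Therefore the difference between the largest and smallest classes of $\chi'$ is strictly less than $n_1-n_4$. This contradicts the choice of $\chi_{min}$, so $x$ must have a neighbor in $V_1^{d\ge4}$.

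The delicate step is this strict decrease. It relies on the upper bound $|N_1(x)|<\frac{n}{3}$ from Observation~\ref{obs:neighborscolor1}, which keeps the new classes $n_1-k$ and $n_4+k$ strictly inside the interval $(n_4,n_1)$. It also relies on $n_2<n_1$, which ensures that no other class of $\chi'$ still attains the old maximum $n_1$.
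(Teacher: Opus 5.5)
Your proof is correct and follows the paper's argument: assume every vertex of $N_1(x)$ has degree $3$, swap colors $1$ and $4$ inside the $2$-$3$-alternating cycle of Lemma~\ref{lem:alternating cycle}, and contradict the minimality of $\chi_{min}$ using $|N_1(x)|<\frac{n}{3}\le n_1-n_4$; your check that all four new class sizes lie in $[n_4,n_1)$ is more complete than the paper's, which only compares the new color-$1$ and color-$4$ classes. One small slip: $K_4$ \emph{does} contain a type I essential vertex (its color-$4$ vertex lies in exactly one good face, and its only color-$1$ neighbor has degree $3$), so the case $n\le 4$ is vacuous not for the reason you give but because the standing assumption $n_1>\frac{n}{2}$ cannot hold there.
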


\begin{proof}

% Let $x \in S_4$. We know that $d(x) \geq 3$, then it is in at least $3$ faces. If $N_1(x) = \emptyset $, then $x$ is in at least $3$ good faces. 
%    % Let $x \in S_4$. If $N_1(x) = \emptyset $, then $x$ is in at least $3$ good faces.
%    \end{proof}

  Let $x \in S_4$ be an essential vertex of type I. By Lemma \ref{lem:color1neighbors}, since $d(x) \geq 3$, $|N_1(x)| \geq 1$. Assume for a contradiction that every vertex in $N_1(x)$ has degree $3$. By Lemmas \ref{lem:color1neighbors} and \ref{lem:alternating cycle}, $x$ has odd degree $d(x) \geq 7$, $|N_1(x)| = \frac{d(x) -1}{2} \geq 3$ and there is a $2$-$3$-separating alternating cycle $C$ containing exactly one vertex of color $4$, namely $x$, and $|N_1(x)|$ vertices of color $1$ in its interior. By swapping colors 1 and 4 in $C$ (Observation \ref{obs:swapingcolors}) we obtain a valid proper $4$-coloring $\chi'$ whose color classes that has $n_1 - |N_1(x)| +1$ vertices of color 1 and $n_4 - 1 + |N_1(x)|$ vertices of color 4. This would decrease the difference between $n_1$ and $n_4$ and hence contradicts our choice of $\chi_{min}$.

We have to be careful here as color 4 may now be the largest color class in $\chi'$.
  Hence the difference would be \[(n_4 - 1 + |N_1(x)|)-(n_1 - |N_1(x)| +1)= (n_4-n_1)+2|N_1(x)|-2<\frac{n}{3},\] 

  where the inequality is valid by Observation~\ref{n1n4} and Observation \ref{obs:swapingcolors}. This still contradicts our choice of $\chi_{min}$ in view of Observation \ref{obs:swapingcolors}.
  \end{proof}
 %If the length of the cycle is at least $\frac{n}{3}$, then the swapping gives a 
% \alma{remind the initial assumptions}
%   \alma{add figure}.

% In the previous subsection, we partitioned the color classes from a proper $4$-coloring into two classes, $R = V_1 \cup V_4$ and $ B = V_2 \cup V_3 $. If $R$ is the smallest color class under this assignment, we guarantee a discrepancy of at most $\frac{n}{3}$ by Lemma \ref{lemma:n_1}. Then, for this section, we assume otherwise. 
%We want to reduce the size of the largest color class $R$ to reduce the discrepancy. We use that to bound the discrepancy.
%\ref{thmDiscrepancy},  

% We start by assuming some specific properties in the $4$-coloring.
% \begin{itemize}
% \item The largest color class has size $n_1 \geq \frac{n}{2}$. Otherwise by Thm 
% \item The discrepancy of the coloring is minimum over all $4$-colorings of $G$.
% \end{itemize}

Similar to the partition of $V_4$ in essential and non-essential vertices, we partition $V_1$ according to the faces they belong to.
Let $S_1 \subseteq V_1$ be the set of vertices in $V_1$ belonging to at least one face that does not contain a vertex from $S_4$. Let $s_1 = |S_1|$. Let $\bar{S}_1 = V_1 \setminus S_1$ and call these vertices \emph{non-essential} since they can be colored red or blue without violating the polychromatic 2-coloring property. 

\begin{observation}\label{obs:deg4neighbors}
    If a vertex $y \in V_1$ has degree $4$ and two neighbors in $S_4$, then the four faces that contain $y$ contain a vertex in $S_4$ and thus $y$ is non-essential. 
\end{observation}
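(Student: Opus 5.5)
This is a direct local check at $y$. Let $y \in V_1$ have degree $4$, with neighbors $u_1,u_2,u_3,u_4$ in cyclic order around $y$. Because $G$ is a triangulation, the four faces containing $y$ are exactly the triangles $y u_i u_{i+1}$, with indices taken mod $4$. By the definition of $S_1$, $y$ is non-essential precisely when each of these four faces contains a vertex of $S_4$. So the task is to show that the two neighbors of $y$ lying in $S_4$ together meet all four faces.

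First we show the two $S_4$-neighbors are antipodal in the cyclic order. Suppose two neighbors $u_i, u_j$ lie in $S_4 \subseteq V_4$. If they were consecutive, say $j=i+1$, then $y u_i u_{i+1}$ is a face. In particular $u_i u_{i+1}$ is an edge joining two vertices of color $4$, which contradicts the fact that $\chi$ (here $\chi_{min}$) is a proper coloring. Hence, up to relabeling, the two $S_4$-neighbors are $u_1$ and $u_3$.

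Next we check coverage of the faces. The faces at $y$ are $yu_1u_2$, $yu_2u_3$, $yu_3u_4$ and $yu_4u_1$. Each of them contains $u_1$ or $u_3$: the first and fourth contain $u_1$, and the second and third contain $u_3$. Therefore every face containing $y$ contains a vertex of $S_4$, so $y \notin S_1$, that is, $y$ is non-essential.

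There is no real obstacle here. The only point needing care is the step that rules out adjacent $S_4$-neighbors, which uses properness of the $4$-coloring. It is also worth noting that this conclusion is what justifies the name: since every face at $y$ already has a red vertex from $S_4$ under $\rho$, recoloring $y$ blue cannot make any face monochromatic.
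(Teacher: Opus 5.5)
Your proof is correct and follows the same reasoning the paper relies on; the paper gives no written proof, only an illustrating figure. Properness of the $4$-coloring prevents the two $S_4$-neighbors of $y$ from being consecutive in the link of $y$, so they are antipodal and together cover all four faces at $y$, which means $y \notin S_1$.
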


\begin{figure}
    \begin{center}
    \includegraphics{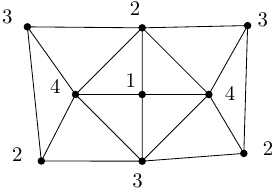}
    \caption{Illustration for Observation \ref{obs:deg4neighbors}}
    \end{center}
\end{figure}
In the following Lemma we state an upper bound for $s_1 + s_4$, which we later use to bound the discrepancy of a polychromatic $2$-coloring.

\begin{lemma}\label{lem:s1s4}
$s_1 + s_4 \leq \frac{5n-8}{7}$.
\end{lemma}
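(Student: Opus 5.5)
The plan is a face-counting (discharging) argument that refines the proof of the warm-up proposation bounding $\operatorname{disc}(\rho)$ by $\frac{n-4}{2}$. There, the two inequalities $3n_1+s_4\le 2n-4$ and $n_1+3s_4\le n$ only gave a $\frac34$ bound. I will instead prove a sharper face inequality in terms of $s_1$ and the types of essential vertices.

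Write $s_4=a+b$, where $a$ is the number of essential vertices of type I and $b$ the number of type II. The target inequality is
\[ 3s_1 + 2a + 2b \le 2n-4. \qquad (\ast) \]
Together with $s_1+3s_4\le n_1+3s_4\le n$, which follows from $s_4\le n_4\le \frac{n-n_1}{3}$ (Observation~\ref{obs:smallerclass}) and $s_1\le n_1$, this suffices. Indeed, $2\cdot(\ast)$ plus the second inequality gives $7s_1+7a+7b\le 5n-8$, which is exactly the claim. So the whole task reduces to $(\ast)$: each type I vertex must earn one extra face beyond its single good face.

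To prove $(\ast)$, start from $|F_1|+|F_{234}|=2n-4$ (Proposition~\ref{propEuler}). Good faces contain exactly one $V_4$ vertex, so $|F_{234}|\ge a+2b$. Faces of $F_1$ are partitioned by their unique $V_1$ vertex $y$, and $y$ contributes $d(y)$ faces. Give each $y\in S_1$ a base charge of $3$ of its faces, and call the remaining faces excess: $d(y)-3$ excess faces if $y\in S_1$, and $d(y)$ if $y\in\bar S_1$. Then
\[ |F_1| \ge 3s_1 + \sum_{y\in V_1} \mathrm{excess}(y), \]
and it remains to show that the total excess is at least $a$.

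By Lemma~\ref{lem:typeI}, each type I vertex $x$ has a neighbor $y\in V_1^{d\ge4}$; assign $x$ to one such $y$. It suffices that the number of $S_4$-neighbors of any $y\in V_1^{d\ge4}$ is at most $\mathrm{excess}(y)$. Neighbors of $y$ from $V_4$ are pairwise non-consecutive in the cyclic order around $y$ (consecutive neighbors are adjacent), so $y$ has at most $\lfloor d(y)/2\rfloor$ of them. The two cases are as follows.

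\begin{itemize}
\item If $y\in\bar S_1$, then $\lfloor d(y)/2\rfloor\le d(y)=\mathrm{excess}(y)$.
\item If $y\in S_1$, some face at $y$ avoids $S_4$. For $d(y)\ge5$ this already leaves at most $\lfloor d(y)/2\rfloor\le d(y)-3$ neighbors in $S_4$. For $d(y)=4$, Observation~\ref{obs:deg4neighbors} shows that two $S_4$ neighbors would force $y\in\bar S_1$, so $y$ has at most $1=d(y)-3$ of them.
\end{itemize}

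Summing over $y$ gives total excess $\ge a$, hence $(\ast)$.

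The main obstacle I expect is this local charging bound, in particular the $d(y)=4$ and $d(y)=5$ boundary cases. One must check that when $y\in S_1$, the face avoiding $S_4$ genuinely limits the $S_4$ neighbors to $d(y)-3$; for $d(y)=5$ this should follow since at most two non-consecutive neighbors fit. One must also check that non-essential $V_4$ neighbors do not interfere, since only $S_4$ neighbors are charged. I would verify these small cases by drawing the wheel around $y$.
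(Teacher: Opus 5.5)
Your proposal is correct and is essentially the paper's proof. The paper's discharging (each face of $F_{234}$ sends its unit to its $V_4$ vertex, and each $y\in V_1^{d\ge 4}$ sends one unit to every $S_4$-neighbor) yields exactly your inequality $3s_1+2s_4\le 2n-4$. It uses the same cases you do: $d(y)=4$ via Observation~\ref{obs:deg4neighbors}, $d(y)\ge 5$ via $\lfloor d(y)/2\rfloor\le d(y)-3$, and Lemma~\ref{lem:typeI} for type I vertices. It then combines this with $s_1+3s_4\le n$ as you do, so your excess bookkeeping is the same discharging written as a count. For $d(y)\ge 5$ the bound needs only that $V_4$ vertices are pairwise non-adjacent, not the $S_4$-free face.
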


\begin{proof}

We use a discharging argument. We assign $1$ unit of charge to every face in $F_{234}$. For every vertex $y \in V_1$, we assign a charge equal to $d(y)$. Thus, the total initial charge is $|F_{234}| + |F_1| = 2n - 4$.
Apply the transfer of charge as follows: Each face in $F_{234}$ transfers its charge to its unique vertex in $V_4$.
For every vertex $y \in V_1$, if $d(y) \geq 4$, then transfer $1$ unit of charge to each of its neighbors that is in $S_4$.
The total final charge is equal to the total initial charge. We analyze the final charge in the vertices of $S_1$ and $S_4$.

 \noindent \textbf{Claim 1}. {\em Every vertex in $S_1$ has final charge at least $3$.}
 
 Let $y \in  S_1$. We know $d(y) \geq 3$. If $d(y) = 3$, then $y$ has final charge equal to $3$. If $d(y) = 4$, then $y$ has at most one neighbour in $S_4$, otherwise it contradicts Observation \ref{obs:deg4neighbors}. Therefore, the final charge is at least $3$.  If $d(y) \geq 5$, then $y$ has at most $\lfloor\frac{d(y)}{2}\rfloor$ neighbors in $S_4$, since no two color 4 vertices are adjacent. Thus, the final charge of $y$ is at least $d(y) - \lfloor\frac{d(y)}{2}\rfloor = \lceil\frac{d(y)}{2}\rceil \geq \lceil\frac{5}{2}\rceil = 3$.

\noindent \textbf{Claim 2}. {\em Every vertex in $S_4$ has final charge at least $2$.}

%Every vertex $x \in S_4$ is either of type I or of type II.
%by definition is incident to a good face, then it gets $1$ unit of charge from it. Moreover, 
Let $x \in S_4$ be an essential vertex.
If $x$ is of type II, then it gets at least $2$ units of charge from the two faces in $F_{234}$. Assume $x$ is type I, then it gets 1 unit of charge from its incident $F_{234}$ face. By Lemma \ref{lem:typeI}, $x$ is adjacent to a vertex $y \in V_1$ such that $d(y)\geq 4$, from which it gets an additional unit of charge. Thus, every vertex in $S_4$ has final charge at least $2$.

Since the total final charge is equal to the total initial charge, the final charge in $S_1 \cup S_4$ is at most the total initial charge.
\begin{equation}\label{eqn:discharging2}
3s_1 + 2s_4 \leq 2n-4 
\end{equation}

We know that $S_1 \subseteq V_1$ and $S_4 \subseteq V_4$,  then  $s_1 \leq n_1$ and $s_4 \leq n_4$. By Observation \ref{obs:smallerclass}, we have $n_4  \leq \frac{n - n_1}{3}$, which implies that:
\begin{equation}\label{eqn:obs}
s_1 + 3s_4 \leq n_1 +3\left(\frac{n - n_1}{3}\right) = n 
\end{equation}

Multiplying by 2 both sides of \eqref{eqn:discharging2} and adding it to \eqref{eqn:obs}, we get $7s_1 +7s_4 \leq 5n - 8$. Then $s_1 + s_4 \leq \frac{5n - 8}{7}$.
\end{proof}

We know that $S_1 \cup S_4 \subseteq V_1 \cup V_4$, and $|V_1 \cup V_4| = n_1 + n_4 > \frac{n}{2}$. If $s_1 + s_4 \geq \frac{n}{2}$, then let $\rho$ be the $2$-coloring that assigns $R = S_1 \cup S_4$. 
%and $B = V_2 \cup V_3 \cup \bar{S_1} \cup \bar{S_4}$. If $s_1 + s_4 < \frac{n}{2}$, then 
Otherwise, add some vertices from $\bar{S_1} \cup \bar{S_4}$ to $R$ in order to guarantee $r = |R| \geq \frac{n}{2}$. The sizes of the color classes of $\rho$ are $r = \max(s_1 + s_4, \frac{n}{2})$ and $b = n - r$. Observe that $\rho$ is a polychromatic $2$-coloring and $disc(\rho) = 2r-n$. 
By Lemma \ref{lem:s1s4}, $\operatorname{disc}(\rho) = 2r - n \leq 2\left(\frac{5n-8}{7}\right)-n = \frac{3n - 16}{7}$. This finishes the proof of Theorem~\ref{thm:3/7}.

\vspace{20pt}
\noindent{\bf Remark:}  The only place that we use the minimality of our 4-coloring is the contradictory argument in the proof of Lemma~\ref{lem:typeI} where a color-4 vertex $x$ and its color-1 neighbors are inside a 2-3-separating cycle formed by its color-2 and color-3 neighbors. Any such cycle is identified by the neighbors of a color-4 vertex. To get an algorithm, we could start from an arbitrary 4-coloring (that can be computed in
%$O(n^2)$ time~\cite{robertson1997four}
$O(n\log n)$ time \cite{Inoue2026}) and then for each such color-4 vertex $x$ we swap the color of $x$ with its color-1 neighhbors. This reduces $n_1-n_4$. This extra step takes a total of $O(n)$ time for all such $x$'s.

\section{Linear-time Algorithm to Compute a Polychromatic $2$-Coloring with Bounded Discrepancy for Planar Triangulations}\label{sectionAlgorithm}

In this section, we show how to compute, for a given planar triangulation, a polychromatic $2$-coloring whose discrepancy is at most $\frac{5n-24}{7}$, in linear time. We use the linear-time algorithm by Bose et al.~\cite{bose2003worst} to compute an initial polychromatic $2$-coloring. Their algorithm uses a linear time algorithm to compute a maximum matching in the dual graph of a triangulation by Biedl et al.~\cite{biedl2001efficient}. Thus, for any given planar graph, a polychromatic $2$-coloring can be computed in linear time. However, this polychromatic $2$-coloring may not have a guarantee on the discrepancy. We show how to extend this algorithm to guarantee a discrepancy of at most $\frac{5n-24}{7}$, without increasing the time complexity. The main approach is to make local modifications. We start by showing that the discrepancy of any polychromatic $2$-coloring on the vertices of a triangulation $G$ has an upper bound in terms of the maximum degree of $G$.

\begin{theorem}
    Let $G$ be a planar triangulation with $n \geq 3$ vertices and maximum degree $\Delta(G)$. Every polychromatic $2$-coloring of $G$ has discrepancy at most  $\frac{(\Delta(G) - 4)n + 8}{\Delta(G)}$.
\end{theorem}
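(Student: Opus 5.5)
Let $\rho$ be an arbitrary polychromatic $2$-coloring of $G$ with classes $R$ and $B$, and assume without loss of generality that $r=|R|\ge b=|B|$. By Observation~\ref{obs:disc_twocases} we have $\operatorname{disc}(\rho)=2r-n=n-2b$. So it suffices to prove the lower bound
\[
b\ \ge\ \frac{2n-4}{\Delta(G)},
\]
since then $\operatorname{disc}(\rho)\le n-\frac{4n-8}{\Delta}=\frac{(\Delta-4)n+8}{\Delta}$, which is exactly the claimed bound.

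To get this lower bound I will double count pairs (face, blue vertex on that face). Because $\rho$ is polychromatic, every face of $G$ contains at least one vertex of $B$. By Proposition~\ref{propEuler}, $G$ has exactly $2n-4$ faces, so the number of incident pairs $(f,v)$ with $v\in B$ and $v$ on $f$ is at least $2n-4$. On the other hand, each vertex $v$ lies on exactly $d(v)\le\Delta(G)$ faces, since in a triangulation the faces around a vertex are in bijection with its incident edges. Hence the number of such pairs is at most $b\,\Delta(G)$. Comparing the two counts gives $b\,\Delta(G)\ge 2n-4$, which is the desired inequality.

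I expect no real obstacle. The only points needing care are, first, that the face count $2n-4$ and the identity "faces at $v$ equal $d(v)$" hold for $n\ge 3$. For $n=3$ there are two faces and each vertex has degree $2$, and both facts still hold. Second, the argument must be applied to the \emph{smaller} class. That is legitimate because the polychromatic condition is symmetric in the two colors, so each class, and in particular the smaller one, meets every face.
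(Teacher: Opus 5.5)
Your proof is correct and is essentially the paper's argument: every face contains a vertex of each color, there are $2n-4$ faces, and each vertex lies on at most $\Delta(G)$ faces, so each color class (in particular the smaller one) has size at least $\frac{2n-4}{\Delta(G)}$, giving $\operatorname{disc}\le n-2\cdot\frac{2n-4}{\Delta(G)}$. Your explicit double count and the $n=3$ check are slightly more careful write-ups of the same step.
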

\begin{proof} 

Let $G$ be a planar triangulation with $n$ vertices, and $\chi: V(G) \rightarrow \{c_1, c_2\}$ a polychromatic $2$-coloring of $G$. By Proposition \ref{propEuler} the number of faces of $G$ is exactly $2n-4$.  Since $\chi$ is a polychromatic coloring, every face contains at least one vertex of each color class. The number of faces that contain a vertex $v$ is the degree of the vertex, $d(v)$. Then, any vertex belongs to at most $\Delta(G)$ faces. Therefore, every color class of $\chi$ has size at least $\frac{2n-4}{\Delta(G)}$. This implies that $\operatorname{disc}(\chi) \leq  n - 2\left(\frac{2n-4}{\Delta(G)}\right) = \frac{(\Delta(G) - 4)n + 8}{\Delta(G)}.$\end{proof}

% \begin{corollary}

% \end{corollary}

To bound the discrepancy after a recoloring process, we define a class of colorings in which there is no single vertex to recolor that decreases the discrepancy and preserves polychromaticity. A polychromatic $2$-coloring $\rho$ of the vertices of a triangulation $G$ is \emph{locally minimal} if there is no vertex in $V(G)$ that can be recolored to reduce the discrepancy of $\rho$ without producing a monochromatic face. This naturally raises the question: How large can the discrepancy of a locally minimal coloring be?

\begin{figure}[ht]
    \centering
    \includegraphics[scale = 0.6]{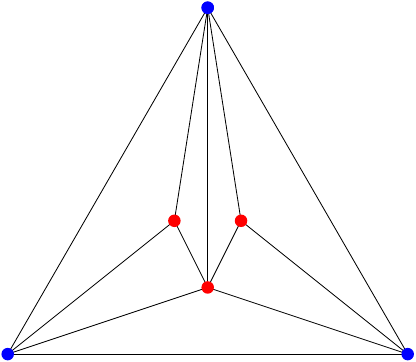}
    \caption{Construction that can be repeated in every face of a triangulation to get a locally minimal coloring. Blue points represent vertices in $B$ and red points represent vertices in $R$.}
    \label{fig:balanceA}
\end{figure}

\begin{theorem}\label{minimal}

Let $G$ be a planar triangulation with $n \geq 6$ vertices. Every polychromatic $2$-coloring of $G$ that is locally minimal has discrepancy at most $\frac{5n-24}{7}$.
    
\end{theorem}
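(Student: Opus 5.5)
Let $R$ and $B$ be the two color classes of a locally minimal polychromatic $2$-coloring $\rho$, with $r=|R|\ge b=|B|$ (the other case is symmetric). The target inequality $2r-n\le\frac{5n-24}{7}$ is equivalent to $7r\le 6n-12$, that is, $r\le 6b-12$. So I will bound the size of the larger class by the number of edges inside the smaller one.

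\textbf{Trivial cases.} If $r-b\le 1$ there is nothing to prove when $n\ge 7$, since then $\frac{5n-24}{7}\ge 1$. When $n=6$ we have $r-b$ even, so $r-b\le1$ forces $r=b$ and discrepancy $0$. So assume $r-b\ge 2$; then recoloring any single vertex of $R$ to blue would strictly decrease the discrepancy.

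\textbf{Step 1: every red vertex has a private witness face.} By local minimality, recoloring a red vertex $v$ to blue must create a monochromatic face. That face contains $v$, and all its other vertices must already be blue. Hence every red vertex $v$ lies in a face whose other two vertices are blue (an \emph{RBB face}). Each RBB face has exactly one red vertex, so assigning to each red vertex one of its RBB faces is injective. Therefore $r\le |F_{RBB}|$, where $F_{RBB}$ is the set of RBB faces.

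\textbf{Step 2: count RBB faces via blue--blue edges.} Every RBB face contains exactly one edge with both endpoints blue. Conversely, every blue--blue edge lies on exactly two faces of the triangulation. By polychromaticity the third vertex of each of these faces is red, so both faces are RBB. Thus $|F_{RBB}|=2e_B$, where $e_B$ is the number of edges of the subgraph $G[B]$ induced by $B$.

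\textbf{Step 3: bound $e_B$ by planarity.} $G[B]$ is planar, so Proposition~\ref{propEuler} gives $e_B\le 3b-6$ when $b\ge3$. Combining the three steps,
\[
r\le 2e_B\le 6b-12=6(n-r)-12,
\]
which gives $7r\le 6n-12$ and hence $\operatorname{disc}(\rho)=2r-n\le\frac{5n-24}{7}$.

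\textbf{Step 4: the small case $b\le 2$.} This is the only step needing care. A triangulation has a face, and $\rho$ is polychromatic, so $b\ge1$. If $b=1$, then $e_B=0$, so by Steps 1 and 2 $r\le 0$, which is impossible. If $b=2$, then $e_B\le 1$, so $r\le 2$ and $n\le 4$, contradicting $n\ge 6$. Hence $b\ge 3$ always holds and the planar edge bound in Step 3 applies.

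The main subtlety is Step 2: I must make sure each blue--blue edge is counted in exactly two RBB faces and that no RBB face is counted twice. This holds because each face is a triangle containing exactly one blue--blue edge, and each edge lies on exactly two faces. The rest is counting.
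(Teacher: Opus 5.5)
Your proof is correct and follows the paper's argument. Local minimality gives each red vertex a face whose other two vertices are blue, double counting shows $r\le 2e_B$, and planarity of $G[B]$ gives $e_B\le 3b-6$, hence $7r\le 6n-12$. Your handling of the edge cases, namely $b\le 2$ (where $3b-6$ is not a valid edge bound) and $n=6$ (where $\frac{5n-24}{7}<1$, so the parity of $r-b$ is needed), is more careful than the paper's, which skips both and also writes the witness edge as having both endpoints in $R$ instead of $B$.
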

\begin{proof}
We start with the upper bound on the discrepancy. Consider an arbitrary planar triangulation $G$ and a locally minimal polychromatic $2$-coloring $\hat{\rho}$ as described before. Let $R$, $B$ be the respective color classes, and let $r$, $b$ be their respective sizes. Without loss of generality, assume $r \geq b$. If $r\le b+1$ then the discrepancy is at most 1. Assume that $r\ge b+2$. Due to minimality of $\hat{\rho}$, every vertex $v \in R$ is incident to a face with an edge $e= (x,y)$ such that both $x$ and $y$ are in $R$, because otherwise we could recolor $v$ to reduce discrepancy. Each such edge $e$ is in exactly two faces of $G$, and hence count for at most two vertices in $R$. That means that the number of monochromatic edges of color $B$ is at least $\frac{r}{2}$.
Let $G_2$ be the subgraph of $G$ induced by $B$. $G_2$ is a planar graph with $b$ vertices, then by Proposition \ref{propEuler} it has at most $3b - 6$ edges. Therefore, there are no more than $3b - 6$ monochromatic edges of color $B$. It follows that $\frac{r}{2}  \leq  3b - 6.$ Since $r + b = n$, we have that $7b - 12 \geq n$, or $b \geq \frac{n+12}{7}$.
Thus $r = n-b \leq n-  \frac{n+12}{7}= \frac{6n-12}{7}$.
The discrepancy $r - b$ is at most $\frac{6n-12}{7} - \frac{n+12}{7} = \frac{5n-24}{7}$.

% This implies that we can use the linear time algorithm given in \cite{bose2003worst} to compute a polychromatic $2$-coloring, and if it is not minimal, a vertex from the largest color class can be changed into the smallest one, decreasing the balance. We repeat this until we get a minimal coloring, and we guarantee the discrepancy to be no more than $\frac{5n-24}{7}$. For each vertex, deciding if it can be changed colors and doing it if it is the case, takes constant time. Therefore, adding this step to the algorithm does not affect its complexity.

We show that this bound is tight by constructing an infinite family of triangulations with a locally minimal polychromatic $2$-coloring whose discrepancy is $\frac{5n-24}{7}$.  We start with $G_2$ equal to a planar triangulation on $b$ vertices of color $B$. Into each triangular face of $G_2$, we place three vertices of color $R$,
connected as shown (for one face) in Figure \ref{fig:balanceA}.
Note that no $R$-vertex can be changed to a $B$-vertex in any instance of this construction,
as each is adjacent to an edge of $G_2$; thus the construction is minimal. With $b$ vertices, $G_2$ has $2b - 4$ triangular faces, by Proposition \ref{propEuler}.
This means that $r = 3(2b - 4) = 6b - 12$, and so the construction has $r > b$
for each $b \geq 3$.
Since $r + b = n$, we have $7b - 12 = n,$ or $b = \frac{n+12}{7}$.
Thus, $r = \frac{6n-12}{7}$, as in the upper bound.
The discrepancy $r-b$ is then $\frac{5n-24}{7}$.
So we meet the upper bound by starting with any triangulation of any size.
\end{proof}

\begin{theorem}\label{thm:linear_algorithm}
    Let $G$ be a planar triangulation with $n\geq 6$ vertices. A polychromatic $2$-coloring on the vertices of $G$ whose discrepancy is at most $\frac{5n-24}{7}$ can be computed in $O(n)$ time complexity.
\end{theorem}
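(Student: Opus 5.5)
Compute an initial polychromatic $2$-coloring in $O(n)$ time with the algorithm of Bose et al.~\cite{bose2003worst}. Then run a local recoloring procedure until the coloring is locally minimal, and invoke Theorem~\ref{minimal} to get the bound $\frac{5n-24}{7}$. What remains is to show that this local-improvement phase runs in linear total time.

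Let $R$ be the larger class, with $r\ge b+2$ (otherwise we are done). A vertex $v\in R$ can be moved to $B$ without creating a monochromatic face exactly when $v$ lies on no face whose other two vertices are both in $R$. Recoloring such a $v$ lowers the discrepancy by $2$.

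\textbf{Key point: the procedure never flips direction.} Every recoloring moves a vertex from $R$ to $B$, and we stop as soon as $r\le b+1$. So $R$ only shrinks, $B$ only grows, and each vertex is recolored at most once. The total number of recolorings is therefore at most $n$.

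\textbf{Maintaining the candidate set.} For each red vertex $v$, keep a counter $c(v)$: the number of faces at $v$ whose other two vertices are red. Initializing all counters costs $\sum_v d(v)=O(n)$. Put every red vertex with $c(v)=0$ in a queue.

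Repeatedly pop a vertex $v$ from the queue. If $v$ is still red, $c(v)=0$, and $r\ge b+2$, recolor $v$ blue. Then, for each face $vxy$ with $x,y$ both red, decrement $c(x)$ and $c(y)$, and enqueue any vertex whose counter reaches zero.

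Recoloring red to blue only ever decreases counters, so a vertex that becomes eligible stays eligible while it is red. A vertex is recolored at most once, so each face is scanned $O(1)$ times. The total work is $O(\sum_v d(v))=O(n)$. When the queue empties (or $r\le b+1$), no red vertex can be recolored, so the coloring is locally minimal.

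\textbf{Main obstacle: the definition of locally minimal.} That definition could also allow recoloring a blue vertex to reduce the discrepancy, but such a move cannot arise here. As long as $r\ge b+2$, moving a blue vertex to red only increases the discrepancy. Once $r\le b+1$, the discrepancy is at most $1$ and the bound holds trivially.

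One check remains: the proof of Theorem~\ref{minimal} uses only the property that every red vertex lies on a face containing a red--red edge opposite it. That is exactly the termination condition here. Polychromaticity is preserved throughout, because we recolor only when $c(v)=0$.
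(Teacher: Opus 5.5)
Your recolorability test is inverted, and as written the algorithm does not preserve polychromaticity. Moving a red vertex $v$ to $B$ creates a monochromatic face exactly when some face $vxy$ at $v$ has $x,y\in B$, since that face becomes all blue. A face with $x,y\in R$ is harmless after the move. In fact, in a polychromatic coloring no red $v$ can lie on a face whose other two vertices are red, because that face would be entirely red. So your counter $c(v)$ is identically $0$, every red vertex enters the queue, and your procedure simply recolors arbitrary red vertices until $r\le b+1$. That generally creates all-blue faces: any red vertex lying on a face with two blue vertices is a counterexample. If the procedure were correct, every triangulation would have a polychromatic $2$-coloring with discrepancy at most $1$, contradicting the $\frac{n}{3}-2$ lower bound of Asayama and Matsumoto. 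The same inversion breaks your last paragraph. The condition ``every red vertex lies on a face with a red--red opposite edge'' can never hold in a polychromatic coloring, and it is not what Theorem~\ref{minimal} uses. That proof needs the opposite edge to be blue--blue, so that these edges lie in $G[B]$ and number at most $3b-6$. The ``$R$'' written in that proof is a typo for $B$, as the subsequent count in $G_2=G[B]$ shows.

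To fix it, let $c(v)$ count the faces at $v$ whose other two vertices are blue, and recolor only when $c(v)=0$. Your monotonicity claim then reverses. Recoloring red to blue can only \emph{increase} counters, so the right invariant is ``a blocked vertex stays blocked'' rather than ``an eligible vertex stays eligible.'' No vertex ever becomes newly eligible, the initial queue suffices, and your re-check at pop time is exactly what is needed. This is the paper's argument: one pass over the vertices in any order, each tested in $O(d(v))$ time, for $O(n)$ total. Your rule of stopping as soon as $r\le b+1$ is a worthwhile addition. The paper's loop does not check the discrepancy, and without such a rule the recoloring can overshoot past balance.
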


\begin{proof}
The proof follows from the following algorithm.
\begin{algorithm}

\begin{algorithmic}
    
    \REQUIRE A planar triangulation $G$, with $n$ vertices.
    \ENSURE A locally minimal polychromatic $2$-coloring on the vertices of $G$.
    
        \STATE Compute a polychromatic $2$-coloring. Let $R$ and $B$ be the two color classes. Without loss of generality assume $|R| \geq |B|$.
        \WHILE{there is a vertex $v \in R$, such that $v$ is not in a face with two vertices from $B$}
        \STATE Recolor $v$ from red to blue and update the color classes to $R\backslash\{v\}$ and $B \cup \{v\}$.
        \ENDWHILE
        \STATE Output the two color classes.

\end{algorithmic}
\caption{Algorithm to compute a minimal polychromatic $2$-coloring }
\label{alg:my-algorithm}
\end{algorithm}

%\alma{Mention the complexity of every step by referencing properly and conclude the subsection}

 % with no special properties related to $\operatorname{disc}(\chi)$. 
%We will show how the discrepancy is also bounded.  

\subparagraph*{Correctness.} The first step is to run the algorithm provided in \cite{bose2003worst}, to compute a coloring with no monochromatic faces. The algorithm always maintains a valid polychromatic $2$-coloring, since it only recolors a vertex when it does not produce monochromatic faces. We only recolor vertices from red to blue and every vertex is recolored at most once. Then the algorithm terminates, and the discrepancy decreases in every recoloring step until the coloring is locally minimal.

 \subparagraph*{Running Time.} Each iteration of the \textbf{while} loop examines a vertex and possibly changes its color. We say a vertex $v \in V$ is \emph{recolorable} if $v \in R$, and $v$ is not in a face with two vertices from $B$.  We can process the vertices in any given order, since a vertex that is not recolorable never becomes recolorable. For every vertex $v$, we can verify if it is recolorable in $O(d(v))$ because we evaluate all the faces that contain $v$. Since every face is triangular, during the whole process we check every face three times, and since $G$ is planar, the number of faces is $3n-6$. Therefore, the total run of the algorithm takes $O(n)$ assuming constant-time adjacency and face incidence queries. 
\end{proof}

\section{Conclusions}

%We have provided an alternative proof for two theorems from \cite{asayama2022balanced}.
%Theorem 7 and Theorem 4 
% Moreover, we have generalized these results by establishing an upper bound on the discrepancy of a polychromatic $2$-coloring in terms of an upper bound on the size of the largest color class in a proper $4$-coloring. 
%This motivates the investigation of conditions on triangulations that ensure the existence of a proper $4$-coloring with a relatively small largest color class. In particular, we focus on triangulations with a bounded independence number.

We studied the upper bounds on the discrepancy of polychromatic $2$-colorings for planar triangulations. We presented techniques that improve the previous bounds, as well as a new result that lead to the tight bound of $\frac{n}{3}$. 

One major question that remains open is whether discrepancy less than $\frac{5n-24}{7}$ can be computed in linear time. It would be interesting to employ our essential/non-essential vertex classification and our  color swapping technique of  Section~\ref{improve-sec} to improve this bound.

\section{Acknowledgement}
Part of this work was done at the 12th Annual Workshop on Geometry and
Graphs, held at Bellairs Research Institute in Barbados in February 2025. We thank the organizers
and the participants.

%{\color{blue}We are grateful to X, Y, and Z who brought the recent result of Kawarabayashi, Yoneda, and Yoneda~\cite{Kawarabayashi2026} to our attention.}

\bibliographystyle{plainurl}
\bibliography{mybibliography.bib}

\end{document}